\documentclass[a4paper]{article}
\usepackage{fullpage}
\usepackage{amssymb, amsmath, amsthm}
\usepackage{graphicx}
\graphicspath{{figures/}}
\usepackage[utf8]{inputenc}
\usepackage[T1]{fontenc}
\usepackage{xcolor}
\usepackage{xspace}
\usepackage{framed}
\usepackage{enumitem}
\usepackage{listings}
\usepackage[numbers,sort&compress]{natbib}
\usepackage{authblk}
\usepackage{subcaption}
\usepackage{hyperref}
\usepackage{thmtools}
\usepackage{thm-restate}

\theoremstyle{theorem}
\newtheorem{theorem}{Theorem}
\newtheorem{lemma}[theorem]{Lemma}

\newtheorem{observation}{Observation}

\newcommand{\poly}{\ensuremath{\mathcal{P}}\xspace} 


\def\C{\ensuremath{\mathcal{C}}\xspace}
\newcommand{\R}{\ensuremath{\mathbb{R}}\xspace}
\newcommand{\Q}{\ensuremath{\mathbb{Q}}\xspace}
\newcommand{\NP}{\ensuremath{\text{NP}}\xspace}
\newcommand{\APX}{\ensuremath{\text{APX}}\xspace}
\newcommand{\etal}{\emph{et al.}}

\newcommand{\comm}[1]{}

\title{Irrational Guards are Sometimes Needed}

\author[1]{Mikkel Abrahamsen}

\author[1]{Anna Adamaszek}

\author[2]{Tillmann Miltzow}

\affil[1]{University of Copenhagen, Denmark.
\texttt{\{miab,anad\}@di.ku.dk}}		
\affil[2]{Institute for Computer Science and Control, 
Hungarian Academy of Sciences (MTA SZTAKI), \texttt{t.miltzow@gmail.com}}


\date{}


\begin{document}
\maketitle

\vfill

  {\centering
  \includegraphics[width=.77\textwidth]{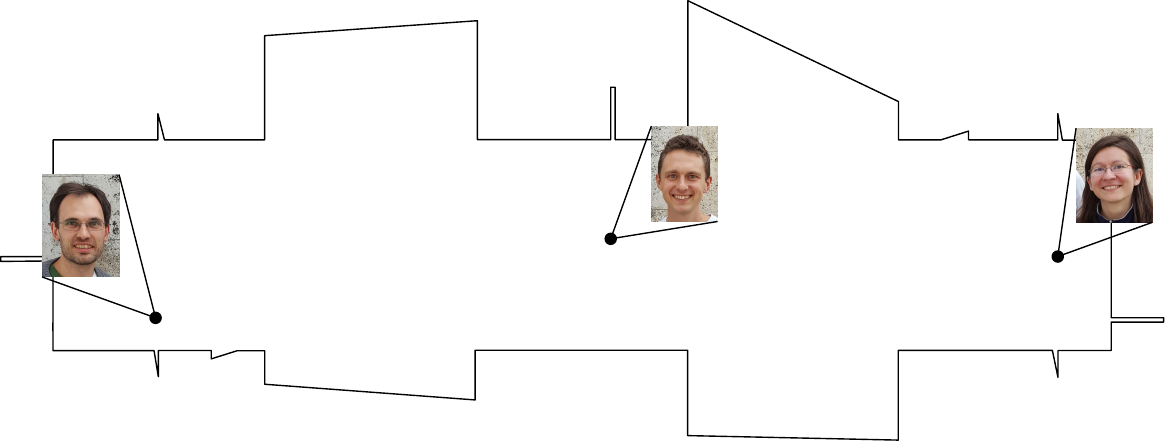}
  
  Till, Mikkel, Anna meticulously guarding the polygon: a little irrational, but pretty optimal.}

\vfill

\begin{abstract}
In this paper we study the \emph{art gallery problem}, which is one of the fundamental problems in computational geometry. The objective is to place a minimum number of guards inside a simple polygon such that the guards together can see the whole polygon. 
We say that a guard at position $x$ sees a point $y$ if the line segment $xy$ is fully contained in the polygon.

Despite an extensive study of the art gallery problem, it remained an open question whether there are polygons given by integer coordinates that require guard positions with irrational coordinates in any optimal solution. We give a positive answer to this question by constructing a \emph{monotone} polygon with integer coordinates that can be guarded by three guards only when we allow to place the guards at points with irrational coordinates. Otherwise,
four guards are needed.
By extending this example, we show that for every $n$,
there is polygon which can be guarded by $3n$ guards with irrational coordinates
but need $4n$ guards if the coordinates have to be rational.
Subsequently, we show that there are
rectilinear polygons given by integer coordinates that
require guards with irrational coordinates in any optimal solution.
\end{abstract}
\vfill

\newpage

\section{Introduction}

For a polygon $\poly$ and points $x,y \in \poly$, we say that $x$ \emph{sees} $y$ if the interval $xy$ is contained in $\poly$.
  A \emph{guard set} $S$ is a set of points in $\poly$ such that
  every point in $\poly$ is seen by some point in $S$.
  The points in $S$ are called \emph{guards}.
  The \emph{art gallery problem} is to find 
  a minimum cardinality guard set for a simple polygon $\poly$ on $n$ vertices.
  The polygon \poly is considered to be filled,
  i.e., it consists of
  a closed polygonal curve in the plane and the bounded
  region enclosed by this curve.

  This classical version of the art gallery problem
  has been originally formulated in 1973 by Victor Klee (see the book of
  O'Rourke~\cite[page 2]{o1987art}).
  It is often referred to as 
  the \emph{interior-guard art gallery problem} or the \emph{point-guard art gallery problem}, to distinguish it from other versions that have been introduced over the years. 
  
In 1978, Steve Fisk provided an elegant proof that $\lfloor n/3 \rfloor$ guards are always sufficient and sometimes necessary to guard a polygon with $n$ vertices~\cite{DBLP:journals/jct/Fisk78a}.
Five years earlier, Victor Klee had posed this question to V\'aclav Chv\'atal, who soon gave a more complicated solution~\cite{chvatal1975combinatorial}. Since then, the art gallery problem has been extensively studied, both from the combinatorial and the algorithmic perspective.
 Most of this research, however, is not focused directly on the classical art gallery problem, but on its numerous versions, including different definitions of visibility, restricted classes of polygons, different shapes of the guards,
 restrictions on the positions of the guards, etc. For more detailed information we refer the reader to the following surveys~\cite{shermer1992recent, urrutia2000art, o1987art, ghosh2007visibility}.
  
Despite extensive research on the art gallery problem, no combinatorial algorithm for finding an optimal solution, or even for deciding whether a guard set of a given size $k$ exists, is known.
  The only exact algorithm is
  attributed to Micha Sharir (see~\cite{DBLP:journals/ipl/EfratH06}), who
  has shown that
  in $n^{O(k)}$ time one can find a guard set consisting of $k$ guards, if such a guard set exists.
  This result is obtained by using standard tools from real algebraic geometry~\cite{basu2006algorithms}, and it is not known how to find an optimal solution without using this powerful machinery (see~\cite{belleville1991computing} for an analysis of the very restricted case of $k=2$). 
  To stress this even more: Without the tools from algebraic geometry,
  we would not know if it is decidable whether a guard set of
  size $k$ exists or not!
  Some recent lower bounds~\cite{DBLP:conf/esa/BonnetM16} based on 
  the exponential time hypothesis
  suggest that there might be no better exact algorithms than the
  one by Sharir. 

  To explain the difficulty in constructing exact algorithms, we want to emphasize that it is \emph{not} known whether the decision version of the art gallery problem (i.e., the problem of deciding whether there is a guard set consisting of $k$ guards, where $k$ is a parameter) lies in the complexity class \NP, even with the algorithm by Sharir. 
  While $\NP$-hardness and $\APX$-hardness of the art gallery problem have been shown for different classes of polygons~\cite{DBLP:journals/tit/LeeL86, DBLP:journals/mlq/SchuchardtH95,tomas2013guarding, broden2001guarding, eidenbenz2001inapproximability,o1983some, MonotonHard},
  the question of whether the point-guard art gallery problem is in $\NP$ remains open.
  A simple way to show $\NP$-membership would be to prove that there always exists an optimal set of guards with \emph{rational} coordinates of polynomially bounded description.

  Indeed,
  S\'{a}ndor Fekete posed at MIT in 2010 and at Dagstuhl in 2011 an open problem, asking
  whether there are polygons requiring
  irrational coordinates in an optimal guard set~\cite{SandorEmails,dagstuhlSeminar}.
  The question has been raised again by G\"{u}nter Rote at EuroCG 2011~\cite{OpenProblem}.
  It has also been mentioned by Rezende \etal~\cite{RezendeSFHKT14}: ``it remains an open question whether there are polygons
given by rational coordinates that require optimal guard positions with irrational
coordinates''.
A similar question has been raised by Friedrichs \etal~\cite{DiscretizeTerrain}:
``[\ldots] it is a long-standing open problem for the more
general Art Gallery Problem (AGP): For the AGP it is not known whether the coordinates
of an optimal guard cover can be represented with a polynomial number of bits''. 

  \medskip
  \noindent \textbf{Our results.}
  We answer the open question of S\'{a}ndor Fekete, by proving the following main
  result of our paper. Recall that a polygon $\poly$ is called \emph{monotone} if there exists a line $l$ such that every line orthogonal to $l$ intersects $\poly$ at most twice.

  \begin{theorem}\label{thm:3-irrational}
    There is a simple monotone polygon $\poly$ with integer coordinates of the vertices such that
  \begin{enumerate}[label=(\roman*), itemsep = 0cm]
    \item  $\poly$ can be guarded by $3$ guards placed at points with irrational coordinates, and
    \item  an optimal guard set of $\poly$ with guards at points with rational coordinates has size $4$.
  \end{enumerate}
\end{theorem}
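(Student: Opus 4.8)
\medskip
\noindent\textbf{Proof plan.}
The idea is to make the whole theorem hinge on a small system of algebraic equations that any three-guard set of \poly is forced to satisfy, and to pick the integer vertex coordinates of \poly so that this system has only irrational solutions. The construction will have a fixed combinatorial skeleton: an $x$-monotone simple polygon containing three thin ``pocket'' spikes $P_1,P_2,P_3$ and three thin ``bridge'' regions $B_{12},B_{23},B_{13}$, separated by a handful of carefully placed reflex vertices. The pocket spikes will be so thin that any point seeing the tip of $P_i$ automatically sees all of $P_i$, and they will be placed so that no point of \poly sees the tips of two different pockets. Consequently every guard set of \poly has at least three guards, and a three-guard set has a bijection between guards and pockets: one guard $g_i$ which alone sees all of $P_i$. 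I would add a few more needle-like features so that the requirement of seeing $P_i$ (together with, say, another needle) pins $g_i$ to a prescribed segment $U_i$; then each $g_i$ is described by a single real parameter $t_i \in \R$.

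Next, the coverage conditions. With $g_1,g_2,g_3$ placed on $U_1,U_2,U_3$, each reflex vertex casts a shadow, and the set of still-uncovered points is exactly $B_{12}\cup B_{23}\cup B_{13}$; moreover $B_{ij}$ can be covered only by $g_i$ together with $g_j$. The statement ``$g_i$ and $g_j$ jointly cover $B_{ij}$'' turns, via comparison of the two shadow boundaries inside $B_{ij}$, into a single inequality relating $t_i$ and $t_j$. I would choose the integer coordinates so that (a) the three inequalities can be simultaneously satisfied only when all three are tight, and (b) the resulting equality system is (up to relabelling) of the shape $t_i\,t_j = c_{ij}$ with integers $c_{ij}$ for which $c_{12}c_{23}c_{13}$ is not a perfect square; then $t_1t_2t_3=\sqrt{c_{12}c_{23}c_{13}}$ and each $t_i$ is a square root of a non-square rational, hence irrational. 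Part~(i) becomes the routine verification that the guards placed at these irrational parameter values do cover every point of \poly --- the pockets, the bridges, and the easy ``bulk'' between them. Part~(ii) follows because \emph{every} three-guard set must solve that system, so no three-guard set has rational coordinates; on the other hand four rational guards suffice, which I would show either by exhibiting four such guards explicitly, by splitting one ideal guard into two nearby rational ones covering the same region, or by invoking Fisk's $\lfloor n/3\rfloor$ bound if the vertex count of \poly permits. Hence the optimal rational guard set has size exactly $4$.

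The main obstacle is the lower bound: turning this picture into a watertight argument. One has to prove rigorously that any three guards must lie one per pocket and on the prescribed segments $U_i$ --- ruling out guards on the boundary, at vertices, or seeing a pocket tip ``from the side'' at an unexpected angle --- and that, with the guards so constrained, the three joint-coverage inequalities leave no rational slack, i.e.\ they can all hold only at the single irrational parameter point. The accompanying engineering challenge is to lay out the pockets, the segments $U_i$, the shadow-casting reflex vertices and the bridge regions so that they interact exactly as described, while keeping every vertex at integer coordinates and \poly simple and $x$-monotone. Once the figure is fixed correctly, each individual visibility claim reduces to an elementary computation with lines and half-planes.
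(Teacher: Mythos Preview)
Your high-level plan matches the paper's: force three guards onto three rational line segments by means of thin pockets, then derive algebraic coverage constraints between neighbouring guards that admit only an irrational solution, and finally obtain four rational guards by splitting one ideal guard into two nearby rational ones. Where you diverge is in the combinatorics and the algebra of the constraints, and that divergence hides the real difficulty.

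The paper uses a \emph{chain} rather than a \emph{triangle}: only the pairs $(g_\ell,g_m)$ and $(g_m,g_r)$ interact, never $(g_\ell,g_r)$. Crucially, each interacting pair is tied together by \emph{two} quadrilateral pockets (one above, one below), not by a single bridge $B_{ij}$. A single shadow-comparison across one bridge yields a relation that is linear (or linear-fractional) in the two parameters, and three such relations on three unknowns have a rational solution whenever the data are rational. The paper gets a genuinely quadratic constraint only because the middle guard must simultaneously satisfy the top-pocket bound $x\le f_1(h)$ and the bottom-pocket bound $x\le f_2(h)$; the extremal position occurs where $f_1(h)=f_2(h)$, a quadratic in $h$, whence $h=2-\sqrt 2$. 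Your assertion that the joint-coverage condition can be arranged to read $t_i t_j=c_{ij}$ with integer $c_{ij}$ is exactly the step that needs an argument and, as far as I can see, is not what visibility geometry naturally produces from a single bridge region.

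The paper's Section~3 is explicit that this is where the work lies: for generic integer pocket coordinates the resulting irrational intersection point does \emph{not} sit on any rational line, so one cannot force a guard through it. Their fix is to reverse-engineer --- choose the three irrational guard positions first (all of the form $r_1+r_2\sqrt 2$), compute the unique rational supporting lines, and then solve for pocket edges that make those guards exactly critical. Your plan treats the integer coordinates as freely tunable to make $c_{12}c_{23}c_{13}$ a non-square; in practice the $c_{ij}$ (or whatever constants appear) are determined by the geometry, and hitting a configuration where the unique solution is irrational \emph{and} lies on the prescribed rational segments $U_i$ is the entire content of the construction.

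One minor point: invoking Fisk's $\lfloor n/3\rfloor$ bound for part~(ii) is not realistic, since any polygon carrying all these pockets and needles has far more than $12$ vertices; the paper (and you, in your first alternative) use the ``split $g_m$ into two nearby rational guards'' argument, which is the right one.
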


We then extend this result, by providing a family of polygons for which the ratio between the number of guards in an optimal solution restricted to guards at rational positions, to the number of guards in an optimal solution allowing irrational guards, is $4/3$. 

    \begin{theorem}\label{thm:4-3-bound}
    There is a family of simple polygons $(\poly_n)_{n \in \mathbb{Z}_+}$ with integer coordinates of the vertices such that 
    \begin{enumerate}[label=(\roman*), itemsep = 0cm]
      \item $\poly_n$ can be guarded by $3n$ guards placed at points with irrational coordinates, and
     \item an optimal guard set of $\poly_n$ with guards at points with rational coordinates has size $4n$.
    \end{enumerate} 
Moreover, the coordinates of the points defining the polygons $\poly_n$ are polynomial in $n$.
\end{theorem}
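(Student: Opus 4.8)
The plan is to assemble $\poly_n$ from $n$ pairwise disjoint translated copies $\poly^{(1)},\dots,\poly^{(n)}$ of the polygon $\poly$ of Theorem~\ref{thm:3-irrational}, joined consecutively into a single simple polygon by thin connecting ``necks''. Concretely, I would translate the $i$-th copy by $(i\cdot D,0)$ for a constant $D$ exceeding the width of $\poly$, and link the right part of $\partial\poly^{(i)}$ to the left part of $\partial\poly^{(i+1)}$ by a narrow, slightly bent corridor attached in an ``uninteresting'' part of the boundary (away from the pockets and reflex vertices that drive the irrationality). Since $\poly$ has constant integer coordinates, every vertex of $\poly_n$ then has integer coordinates of magnitude $O(n)$, giving the claimed polynomial bound for free. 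Both remaining claims rest on one structural property, which I would call \emph{visual independence}: for each $i$ there is a ``critical set'' $C_i\subseteq\poly^{(i)}$ (the subregion of $\poly$ whose coverage forces irrational guard coordinates) such that every point of $\poly_n$ that sees a point of $C_i$ lies in $\poly^{(i)}$, and no point of $\poly_n$ sees points of two distinct $C_i,C_j$. Thinness and the bend of the necks, together with the chosen attachment points, are what make no sightline traverse a neck into a critical set.

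\textbf{Upper bound.} For each $i$ I would place in $\poly^{(i)}$ the three irrational guards supplied by Theorem~\ref{thm:3-irrational}(i), covering all of $\poly^{(i)}$. The necks should be designed so each is already covered — e.g.\ absorbed into the adjacent copies, or attached at spots that the copies' own guards already oversee — so that $3n$ irrational guards guard all of $\poly_n$.

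\textbf{Lower bound.} Let $S$ be any guard set of $\poly_n$ with rational coordinates, and set $S_i=\{s\in S: s \text{ sees a point of } C_i\}$. By visual independence the $S_i$ are pairwise disjoint, each $S_i\subseteq\poly^{(i)}$, and $S_i$ alone must cover $C_i$ using sightlines internal to $\poly^{(i)}$. I would state the proof of Theorem~\ref{thm:3-irrational}(ii) in the localized form ``every rational guard set of $\poly$ covering $C$ has size at least $4$'' (the $3\!\to\!4$ gap being caused by $C$ alone), so that applying it inside $\poly^{(i)}$ — with its attached neck stubs, which only add visibility outside $C_i$ — yields $|S_i|\ge 4$. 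Hence $|S|\ge\sum_{i=1}^n|S_i|\ge 4n$, and combined with the upper bound this proves both parts of the theorem.

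\textbf{Main obstacle.} The delicate point is not the bookkeeping but verifying that the lower-bound argument of Theorem~\ref{thm:3-irrational} is \emph{robust} under the embedding: attaching necks gives a guard in $\poly^{(i)}$ a little extra visibility into a corridor, and one must check this cannot relax the algebraic constraints that made rational coordinates insufficient for $C$. I expect this to be handled by attaching the necks far from the features responsible for the irrationality, so that the extra visibility is provably irrelevant to covering $C_i$, and by confirming that no point in a neck (or in another copy) can see $C_i$ at all, so that the per-copy counting is airtight. A minor but helpful observation is that $\poly_n$ need not be monotone — only simple — which gives extra freedom in routing and bending the necks to guarantee visual independence.
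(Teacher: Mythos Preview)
Your proposal is correct and takes essentially the same approach as the paper: $n$ copies of $\poly$ chained by thin bent corridors, with the three irrational guards per copy covering everything (the corridors being attached so that the existing guards already see them), and the lower bound coming from the visual independence of each copy's critical features from the rest of $\poly_n$. The only cosmetic difference is in the lower-bound bookkeeping --- the paper argues by pigeonhole (if there are at most $4n-1$ guards then some copy's ``extension'' contains at most three, and guards in the corridor bends cannot see that copy's pocket vertices, forcing those three to be irrational) rather than your direct partition-and-sum over the sets $S_i$ --- but the underlying geometric fact (no point in a neck or another copy sees the critical features of $\poly^{(i)}$) is identical, and you have correctly identified it as the one thing that actually needs checking.
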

  
We show that the phenomenon with guards at irrational coordinates occurs also in the important class of rectilinear polygons.
  
\begin{restatable}
    {theorem}{rectilinear}
    \label{thm:rectilinear}
There is a rectilinear polygon $\poly_R$ with vertices at integer coordinates satisfying the following properties.
    \begin{enumerate}[label=(\roman*), itemsep = 0cm]
      \item \label{itm:9guards} $\poly_R$ can be guarded by $9$ guards if we allow placing guards at points with irrational coordinates.     \item  \label{itm:10guards} An optimal guard set of  $\poly_R$ with guards at points with rational coordinates has size $10$.    \end{enumerate}
\end{restatable}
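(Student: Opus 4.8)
\medskip
\noindent\textbf{Proof strategy for Theorem~\ref{thm:rectilinear}.}
A naive approach --- replacing each slanted edge of the polygon of Theorem~\ref{thm:3-irrational} by a fine staircase --- does not work directly, since individual steps tend to create small pockets each demanding its own guard, which destroys the tight guard count. Instead, the plan is to build a rectilinear polygon $\poly_R$ from scratch that reproduces, using axis-parallel walls only, the mechanism that forces irrationality in Theorem~\ref{thm:3-irrational}. Recall that in the monotone construction the three guards are pinned to narrow one-dimensional families by the requirement of seeing certain ``deep'' witness points, and that covering a long boundary segment with two of these guards couples their positions through maps that are projective: a sight line grazing a reflex vertex and meeting a boundary line induces a perspectivity between the two lines, and composing two such constraints yields a degree-two equation whose optimal solution is irrational, while replacing it by a rational guard forces an extra guard. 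None of this uses that polygon edges may be slanted: a ``window'' (a gap between two reflex vertices) and the perspectivities it induces behave identically in a rectilinear polygon. So the first step is to isolate this core gadget --- the central chamber, the windowed prongs carrying the deep witnesses, and the boundary segment whose coverage creates the coupled quadratic --- and re-draw it with only horizontal and vertical edges, at integer coordinates, keeping the same projective data.

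The second step is to assemble $\poly_R$ from (a) this rectilinear core gadget together with (b) a bounded number of further rectilinear features --- thin corridors, spikes and rooms --- whose sole purpose is either to carry the windows and witnesses required by (a) or to be coverable by exactly one guard each. Rectilinearization unavoidably introduces extra reflex vertices and detached sub-regions, and each is paid for with a guard; accounting for these carefully is what turns the ``$3$ vs.\ $4$'' of Theorem~\ref{thm:3-irrational} into the ``$9$ vs.\ $10$'' of Theorem~\ref{thm:rectilinear}. For part~\ref{itm:9guards} I would then exhibit the explicit $9$-guard set: three guards at the irrational point(s) determined by the quadratic, verified by direct visibility computation to cover the whole core gadget, and six further guards at rational points visibly covering the auxiliary corridors, spikes and rooms.

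Part~\ref{itm:10guards} is a lower bound by case analysis. One shows that the auxiliary features and the core gadget partition $\poly_R$ into pieces whose individual guard-number lower bounds sum to $9$, so that any $9$-guard set must spend exactly three guards on the core gadget; and then, exactly as in the proof of Theorem~\ref{thm:3-irrational}, that three guards covering the core gadget are forced to the unique irrational solution of the coupled projective constraints, so no rational $9$-guard set exists and a tenth guard is needed. I expect the main obstacle to be this lower bound, and within it the most delicate point is ruling out ``leakage'': one must ensure that a guard placed in an unintended part of $\poly_R$ cannot help cover a witness of the core gadget, i.e., that every stray sight line is blocked by an axis-parallel wall. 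Making the corridors thin enough and placing the windows and witnesses so that the induced perspectivities are exactly the ones producing an irrational root, while keeping all vertices integral and $\poly_R$ simple, is the technical heart of the construction; once the visibility structure is fixed, both parts follow by the reasoning already developed for Theorem~\ref{thm:3-irrational}.
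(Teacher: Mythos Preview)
Your strategy is plausible in outline but diverges from the paper's construction, and the divergence matters because it leaves unresolved precisely the step you yourself flag as ``the technical heart.'' You propose to \emph{rebuild} the irrational-forcing mechanism with axis-parallel walls: rectilinear windows, rectilinear witness prongs, and perspectivities whose composition has an irrational fixed point, all with integer vertices. That may be possible, but you give no evidence that such a configuration exists; the delicate search in Section~\ref{sec:intuition} already shows how constrained the numerics are even when slanted edges are allowed, and removing that freedom only makes the search harder.

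The paper sidesteps this entirely by a black-box reduction to Theorem~\ref{thm:3-irrational} rather than by re-engineering the gadget. It keeps the original polygon $\poly$ and, for each non-axis-parallel edge, attaches a rectilinear pocket $Q$ \emph{on the outside} of that edge. The pocket is designed so that (a) it needs at least one interior guard, (b) there is a unique point $q(Q)$ seeing the whole pocket, and---this is the key trick---(c) $q(Q)$ lies on the extension of the original slanted edge, so the boundary of what $q(Q)$ sees inside the main body is exactly that slanted line. Hence, once one guard is forced to each of the six $q(Q)$'s, the region still unseen is \emph{literally} $\poly$, and Theorem~\ref{thm:3-irrational} applies verbatim to the remaining three guards. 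The four triangular pockets on the vertical lines $l_\ell,l_r$ are simply squared off, and the count $6+3=9$ falls out. The lower bound is then a short counting argument: each of the six pockets needs a guard, three more are forced by the squared-off triangular pockets and $R_m$, so with nine guards every pocket guard must sit at its $q(Q)$, reducing to $\poly$.

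So the paper never constructs a ``rectilinear core gadget'' at all; it uses guard visibility to \emph{simulate} the slanted edges as shadow boundaries. Your plan would require you to discover, from scratch, a rectilinear instance with the right quadratic---work the paper avoids. If you want to pursue your route, the concrete missing piece is an explicit rectilinear configuration of windows and witness segments, with integer vertices, whose induced projective constraints force an irrational guard; absent that, the proposal is a sketch rather than a proof.
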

    
  \medskip
  \noindent \textbf{Other related work.}
  The art gallery problem has been studied from the perspective of approximation algorithms.
   Efrat and Har-Peled~\cite{DBLP:journals/ipl/EfratH06} gave a randomized  
   polynomial time
   algorithm 
   for finding a guard set $S$ where the guards are
   restricted to a very fine grid $\Gamma$. 
   To be more precise, if coordinates of the vertices of the input polygon $\poly$ are given by
   positive integers and $L$ is the largest such integer,
   then $\Gamma$ can be defined as the points in
   $L^{-20}\cdot \mathbb{Z}^2 \cap \poly$.
   Let $OPT_{\text{grid}}\subset \Gamma$ be a guard
   set with a minimum number of guards under this restriction.
   The algorithm of Efrat and Har-Peled yields an $O(\log |OPT_{\text{grid}}|)$-approximation for this problem.
   However, it remained
   open whether $OPT_{\text{grid}}$ is an
   approximation of an optimal unrestricted guard set $OPT$.
  Bonnet and Miltzow~\cite{DBLP:journals/corr/BonnetM16b}
  filled this gap by showing that under a general position
  assumption
  $|OPT_{\text{grid}}|= O(|OPT|)$,
  which yields the first polynomial time approximation algorithm for simple polygons under this assumption.
  It is easy to construct a polygon with integer coordinates that forces 
  a guard on the point $(1/3,1/3)$, which might not lie on the grid, in case that $L$ is not divisible by $3$. This implies that $OPT_{\text{grid}}$ is not optimal. But this does not rule out that there is a slightly more clever choice of $\Gamma$ so that $OPT_{\text{grid}}$ is indeed optimal.
  It follows from our
  Theorem~\ref{thm:4-3-bound} that
  there are polygons (requiring arbitrarily many guards in an optimal guard set) such that 
  for any choice of $\Gamma\subset\Q^2$, 
   it holds that $|OPT_{\text{grid}}|\geq 4/3\cdot |OPT|$.
  No lower bound of this kind has been known before.
  More general, our result shows that no algorithm which considers only rational points as possible guard positions can achieve an approximation ratio better than $4/3$.

 A new line of research focuses on implementing algorithms that are capable 
 of solving instances of the art gallery problem with thousands of vertices, giving a solution which is close to the optimal one, see the recent survey by Rezende \etal~\cite{RezendeSFHKT14}.
 They explain that many
 practical algorithms rely on
 ``routines to find candidates for discrete guard and witness locations.'' 
 We show that this technique inevitably leads to sub-optimal solutions unless irrational
 candidate locations are also considered. 
    We believe that our example and techniques are a good starting point to construct benchmark instances for implementations of art gallery algorithms.
  Benchmark instances serve to validate the quality of algorithms. Using the same instances when comparing different algorithms makes the results comparable.

 A problem related to the art-gallery problem is the \emph{terrain guarding problem}.
  In this problem, an $x$-monotone polygonal curve $c$ (i.e., the terrain) is given.
  The region $R$ above the curve $c$ has to be guarded,
  and the guards are restricted to lie on $c$.
  Similarly as in our problem, a guard $x$ sees a point $y$ if $xy$ is contained in the region $R$.
  Although the solution space of the terrain guarding instance is the continuous polygonal curve $c$,
  a discretization of the solution space has been recently described by
Friedrichs \etal~\cite{DiscretizeTerrain}.
Given a terrain with $n$ vertices at integer position, they 
describe a set $S\subset \Q$ of size $O(n^3)$, computable in polynomial time,
such that there is an optimal guard placement restricted to $S$.
It follows that for the terrain guarding problem the phenomenon with irrational numbers does not appear, and also the decision version of the terrain guarding problem is in $\NP$.


Irrational numbers turn up surprisingly in other areas of computational geometry. One such example is the
  \emph{nested polytopes problem}. Here, we are given two nested
   polytopes $S\subseteq P$ and want to find a polytope $T$ with a minimum
   number of corners such that $T$ is nested between $S$ and $P$,
   i.e., $S\subseteq T\subseteq P$.
  Christikov \etal~\cite{chistikov_et_al:LIPIcs:2016:6238} recently
  gave an example of two
  nested polytopes $S\subseteq P$ in $\R^3$, with all corners at rational
  coordinates, such that there is a unique
  polytope $T$ with $5$ corners nested between $S$ and $P$, and $T$ has corners
  with irrational coordinates.
  The nested polytopes problem is closely related to \emph{nonnegative matrix
  multiplication},
  where similar phenomena have been discovered, that
  a problem defined entirely by rational numbers has an optimal
  solution requiring irrational numbers~\cite{chistikov_et_al:LIPIcs:2016:6238,
  chistikov2016nonnegative}.
  
        \comm{
  Let $M$ be a nonnegative matrix, i.e., a matrix with all entries nonnegative real
  numbers.
  A \emph{nonnegative matrix factorization} (NMF) of $M$ is a pair of nonnegative
  matrices $(W,H)$ such that $M = W \cdot H$.
   The \emph{nonnegative rank} of $M$ is the smallest number $d$ such that there
   exists a NMF $(W,H)$ of $M$ where $W$ has $d$ columns.
   In 1993, Cohen and Rothblum~\cite{cohen1993nonnegative} asked
   the following question: If $M$ is a nonnegative matrix with rational entries only,
   does there always exist a NMF $(W,H)$ realizing the nonnegative rank of $M$ such
   that $W$ and $H$ are also rational?
   Recently, Chistikov \etal~\cite{chistikov2016nonnegative} answered this question in
   the negative by providing a rational nonnegative matrix $M\in\Q^{6\times 11}$
   with nonnegative rank $5$ such that
   for any NMF $(W,H)$ of $M$ where $W$ has $5$ columns, $W$ and $H$ have
   irrational entries.
   
  A \emph{restricted nonnegative matrix factorization} (RNMF) of $M$
  is a NMF $(W,H)$ of $M$ such that the columns of $M$ and $W$ span the same space.
  The \emph{restricted nonnegative rank} of $M$ is defined accordingly.
  Christikov \etal~\cite{chistikov_et_al:LIPIcs:2016:6238} gave an example
  of a rational nonnegative matrix $M\in\Q^{6\times 6}$
   with restricted nonnegative rank $5$ such that
   for any RNMF $(W,H)$ of $M$ where $W$ has $5$ columns, $W$ and $H$ have
   irrational entries.
   This result seems particularly related to our results due to its relation to the
   \emph{nested polytope problem}. Given two nested
   polytopes $S\subseteq P$, a polytope $T$ is \emph{nested} between $S$ and $P$ if
   $S\subseteq T\subseteq P$.
   Gillis and Glineur~\cite{gillis2012geometric} showed that there is an equivalence
   between the computation RNMFs and nested polytopes.
   For a given matrix $M$, there are two polytopes $S$ and $P$ derived from $M$
   such that the restricted nonnegative rank of $M$ is $d$ if and only if
   the minimum number of corners of a polytope $T$ nested between $S$
   and $P$ is $d$.
   Christikov \etal~\cite{chistikov_et_al:LIPIcs:2016:6238} define
   two polytopes $S$ and $P$ in $\R^3$ with rational
   corners such that there is a polytope $T$
   with $5$ corners nested between $S$ and $P$,
   but none where all corners have rational coordinates.
   }

  \medskip
  \noindent \textbf{The Structure of the Paper.}
  Section~\ref{sec:polygon} contains the description of
  a monotone polygon $\poly$ with vertices at points with rational coordinates that can be guarded
  by three guards only if the guards are placed at points with irrational coordinates.
  In Section~\ref{sec:intuition}, we describe the intuition behind our
  construction, and explain how we have found the polygon $\poly$.
  The formal proof of Theorems \ref{thm:3-irrational} and \ref{thm:4-3-bound}
  is then provided
  in Section~\ref{proofSec}.
  In Section~\ref{sec:rectilinear}, we present
  the rectilinear polygon $\poly_R$ from Theorem~\ref{thm:rectilinear}
  requiring guards with irrational coordinates in an optimal guard set.
  Finally, in Section~\ref{sec:future} we suggest some open problems for future research.

\section{The Polygon}\label{sec:polygon}

In Figure \ref{fig:polygon} we present the polygon $\poly$. In Section~\ref{proofSec} we will prove that $\poly$ can be guarded by three guards only when we allow the guards to be placed at points with irrational coordinates.

\begin{figure}
\centering
\includegraphics[width=1.55\textwidth, angle =270]{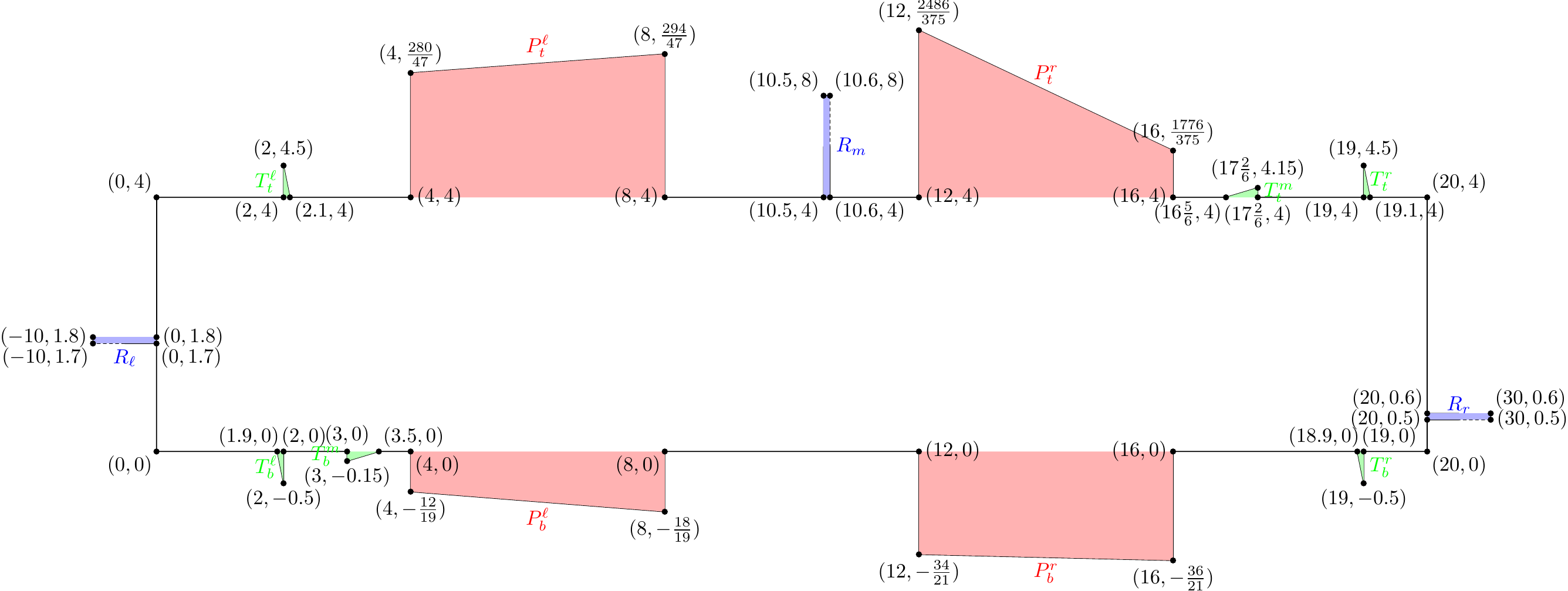}
\caption{The polygon $\poly$. We will show that $\poly$ can be guarded by three guards only when we allow the guards to be placed at points with irrational coordinates.
 For practical reasons, the blue rectagular pockets are drawn shorter than they actually are.}
\label{fig:polygon}
\end{figure}

The polygon $\poly$ is constructed as follows. We start with a \emph{basic rectangle} $[0,20]\times[0,4] \subset \mathbb{R}^2$. Then, we append to it six \emph{triangular pockets} (colored with green in the figure), which are triangles defined by the following coordinates

\begin{tabular}{lcl}
$T^\ell_t:\ \{(2,4),(2,4.5),(2.1,4)\}$, & & $T^\ell_b:\ \{(2,0),(2,-0.5),(1.9,0)\}$,\\
$T^m_t:\ \{(16\frac{5}{6},4),(17\frac{2}{6},4.15),(17\frac{2}{6},4)\}$, & & $T^m_b:\ \{(3.5,0),(3,-0.15),(3,0)\}$,\\ 
$T^r_t:\ \{(19,4),(19,4.5),(19.1,4)\}$, & and &  $T^r_b:\ \{(19,0),(19,-0.5),(18.9,0)\}$.\\
\end{tabular}

Next, we append three \emph{rectangular pockets} (colored with blue in the figure,
for practical reasons these pockets are drawn in the figure shorter than they actually are),
which are rectangles defined in the following way.\\
\indent $R_\ell$: $[-10,0]\times [1.7,1.8]$, $R_r$: $[20,30]\times [0.5,0.6]$,
and $R_m$:  $[10.5,10.6]\times [4,8]$.

Last, we append four \emph{quadrilateral pockets} (colored with red in the figure), which are defined by points with the following coordinates

\begin{tabular}{lllll}
Top-left pocket $P^\ell_t$ & $\{(4,4),$ & $(4,\frac{280}{47}),$ & $(8,\frac{294}{47}),$ & $(8,4)\}$\\
Top-right pocket $P^r_t$ & $\{(12,4),$ & $(12,\frac{2486}{375}),$ & $(16,\frac{1776}{375}),$ & $(16,4)\}$\\
Bottom-left pocket $P^\ell_b$ & $\{(4,0),$ & $(4,-\frac{12}{19}),$ & $(8,-\frac{18}{19}),$ & $(8,0)\}$\\
Bottom-right pocket $P^r_b$ & $\{(12,0),$ & $(12,-\frac{34}{21}),$ & $(16,-\frac{36}{21}),$ & $(16,0)\}$.\\
\end{tabular}

The polygon $\poly$ is clearly monotone.
We will denote by $e^\ell_t$, $e^r_t$, $e^\ell_b$, and $e^r_b$ the non-axis-parallel
edge within each of the four quadrilateral pockets, respectively.

\section{Intuition}\label{sec:intuition}

In this section, we explain
the key ideas behind the construction of the polygon $\poly$.
Our presentation is informal, but it
resembles the work process that lead to the construction of $\poly$ more than the
formal proof of Theorem~\ref{thm:3-irrational} in Section~\ref{proofSec} does. 
Here we omit all ``scary'' computations and focus on conveying the big picture. In the end of this section, we also explain how we actually constructed the polygon $\poly$.

Define a \emph{rational point} to be a point with two rational coordinates.
An \emph{irrational point} is a point that is not rational.
A \emph{rational line} is a line that contains two rational points.
An \emph{irrational line} is a line that is not rational.

  \medskip
\noindent \textbf{Forcing a Guard on a Line Segment.}
Consider the drawing of the polygon $\poly$ in Figure \ref{fig:polygon}. We will now explain an idea of how three pairs of triangular pockets, $(T^\ell_t, T^\ell_b)$, $(T^m_t, T^m_b)$, and $(T^r_t, T^r_b)$, can enforce three guards on three line segments within $\poly$.

Consider the two triangular pockets in Figure~\ref{fig:ForceLocal}. The blue line segment contains one edge of each of these pockets, and the interiors of the pockets are at different sides of the line segment. A guard which sees the point $t$ must be placed within the orange triangular region, and a guard which sees $b$ must be placed within the yellow triangular region.
Thus, a single guard can see both $t$ and $b$ only if it is on the blue line segment $tb$, which is the intersection of the two regions.

  \begin{figure}[htbp]
  \centering
  \begin{minipage}{0.4\textwidth}
    \centering
    \includegraphics{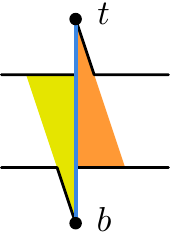}
    \subcaption{The only way that one guard can
    see both $t$ and $b$ is when
    the guard is on the blue line.}
    \label{fig:ForceLocal}
  \end{minipage}
  \hfill
  \begin{minipage}{0.55\textwidth}
    \centering
    \includegraphics{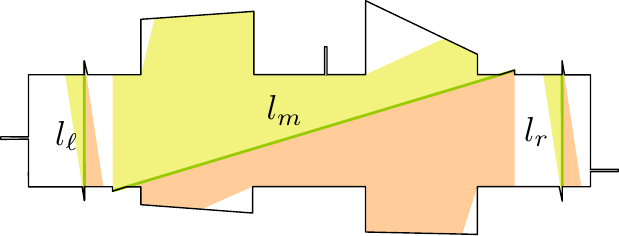}
    \subcaption{The only way to guard the polygon with three guards requires one guard on each of the green lines $l_\ell,l_m,l_r$.}
    \label{fig:ForceGlobal}
  \end{minipage}
  
  \caption{Forcing guards to lie on specific line segments.}
  \label{fig:SimpleLineForcing}
\end{figure}

Consider now the case that we have $k$ pairs of triangular pockets, and no
two regions corresponding to different pairs of pockets intersect.
In order to guard the polygon with $k$ guards,
there must be one guard on the line segment corresponding to each pair.
Our polygon $\poly$ has three such pairs of pockets 
(see Figure~\ref{fig:ForceGlobal}),
and it can be checked that the corresponding regions do not intersect.  

Notice that in this way we can only enforce a guard to be on a rational line, as the line contains vertices of the polygon, which are rational points.
  
  \medskip
\noindent \textbf{Restricting a Guard to a Region Bounded by a Curve.}
For the following discussion, see the Figure~\ref{fig:ConvexRegion} and notation therein. 
We want to guard the polygon from  Figure~\ref{fig:ConvexRegion} using two guards, $g_1$ and $g_2$.
We assume that $g_1$ is forced to be on the blue vertical line $l$. 

Consider some position of $g_1$ on $l$, such that $g_1$ can see at least one point of the top edge $e_t$ of the top quadrilateral pocket, and at least one point of the bottom edge $e_b$ of the bottom quadrilateral pocket. Let $p_t$ and $p_b$ denote the leftmost points
seen by $g_1$ on $e_t$ and $e_b$, respectively.
Observe that $p_t$ moves to the right if $g_1$ moves up, and
to the left if $g_1$ moves down.
The point $p_b$ behaves in the opposite way when $g_1$ is moved.
Consider some fixed position of $g_1$ on the blue line, and the corresponding positions of $p_t$ and $p_b$.
Let $b$ be the bottom right corner of the top pocket, and $d$ the top right corner of the bottom pocket.
Let $i$ be the intersection point of the line containing $p_t$ and $b$, with the line
containing $p_b$ and $d$. 
The points $b,d,i$ define a triangular region $\Delta$.
It is clear that if we place the guard $g_2$ anywhere inside $\Delta$,
then $g_1$ and $g_2$ will together see the entire polygon.
On the other hand, if we place $g_2$ to the right of $\Delta$, then
$g_1$ and $g_2$ will not see the entire polygon,
as some part of the top or the bottom pocket will not be seen.

\begin{figure}[htbp]
  \centering
  \includegraphics{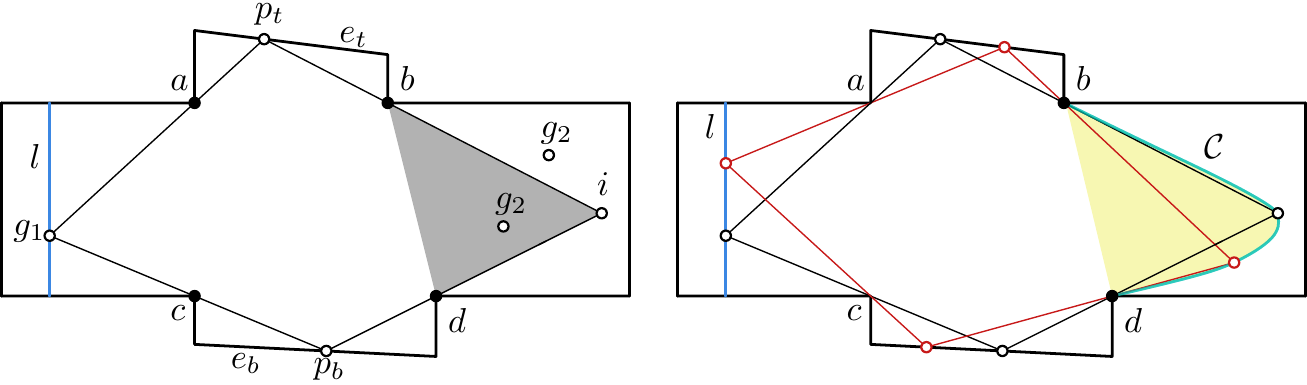}
  \caption{Left: The guard $g_2$ must be inside the triangular region (or to the left of it) in order to guard the entire part of the polygon that is not seen by $g_1$.
  Right: All possible positions of the point $i$ define a simple curve \C.
  }
  \label{fig:ConvexRegion}
\end{figure}

Now, let us move the guard $g_1$ along the blue line.
Each position of $g_1$ yields some intersection point $i$.
We denote the union of all these intersection points by $\C$ (see the right picture in Figure~\ref{fig:ConvexRegion}).
It is easy to see that \C is a simple curve.
We can compute a parameterization of \C since
 we have described how to construct the point $i$ as a function of
 the position of $g_1$.
 
Note that $g_2$ sees a larger part of \emph{both} pockets if it is moved
horizontally to the left and
a smaller part of \emph{both} pockets if it is moved horizontally to the right.
 Consider a fixed position of
  $g_2$ on or to the right of the segment $bd$.
  Let $g_2'$ be the horizontal projection of $g_2$ on \C.
  Let $g_1$ be the unique position on the blue line such that
  $g_1$ and $g_2'$ see all of the polygon.
  If $g_2$ is to the left of \C, $g_2'$ sees less of the pockets than $g_2$,
  so $g_1$ and $g_2$ can together see everything.
  If $g_2$ is to the right of \C, $g_2$ sees less of the pockets than
  $g_2'$ and 
  neither the top nor the bottom pocket are completely guarded by
  $g_1$ and $g_2$. For any higher placement of $g_1$ even less of the top pocket is guarded and for any lower placement of $g_1$ even less of the bottom pocket is guarded. Thus, there exists no placement of $g_1$ such that both pockets are completely guarded by $g_1$ and $g_2$.
  We summarize our reasoning in the following observation.
 
 \begin{observation}\label{curveObs}
  Consider a fixed position of
  $g_2$ on or to the right of the segment $bd$.
  There exists a position of
  $g_1$ on $l$ such that the entire polygon is seen by $g_1$ and $g_2$
  if and only if $g_2$ lies on or to the left of the curve \C.
  \end{observation}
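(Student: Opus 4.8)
The plan is to derive the statement from the monotonicity facts already recorded in this section, using the horizontal projection onto \C as the main device. Recall what we may assume: as $g_1$ moves up along $l$, the leftmost visible point $p_t$ on $e_t$ moves right (so $g_1$ sees less of the top pocket) while $p_b$ moves left (so $g_1$ sees more of the bottom pocket), and symmetrically when $g_1$ moves down; moving $g_2$ horizontally to the left enlarges the visible portion of \emph{both} pockets, and to the right shrinks both; for a fixed $g_1$, the pair $\{g_1,g_2\}$ sees all of $\poly$ iff $g_2$ lies in $\Delta(g_1)$ or to its left, equivalently iff $g_2$ lies weakly to the left of the line through $p_t(g_1)$ and $b$ \emph{and} weakly to the left of the line through $p_b(g_1)$ and $d$, the point $i(g_1)$ being exactly the intersection of these two lines; and $\C=\{i(g_1):g_1\in l\}$ is a simple curve. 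I will also use that \C is $y$-monotone, which follows from the explicit parameterization of $i(g_1)$, so that the horizontal projection of any $g_2$ on or to the right of $bd$ onto \C is a single well-defined point $g_2'$, and that $g_1\mapsto i(g_1)$ is a bijection onto \C, so that $g_2'$ determines a unique position $g_1$ with $i(g_1)=g_2'$.

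For the ``if'' direction, suppose $g_2$ lies on or to the left of \C. Take $g_2'$ and the corresponding $g_1$ as above; by the definition of \C the pair $\{g_1,g_2'\}$ sees all of $\poly$. Since $g_2$ lies at the same height as $g_2'$ but weakly to its left, $g_2$ sees at least as much of each pocket as $g_2'$ does, and the only part of $\poly$ possibly unseen by $g_1$ lies inside the two pockets, so $\{g_1,g_2\}$ sees all of $\poly$ as well.

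For the ``only if'' direction I argue the contrapositive. Suppose $g_2$ lies strictly to the right of \C, and again let $g_1$ be the position with $i(g_1)=g_2'$. Because $g_2$ lies strictly to the right of the apex $i(g_1)$ at the same height, and $\Delta(g_1)$ lies to the left of both edges meeting at $i(g_1)$, the point $g_2$ lies strictly on the forbidden side of \emph{both} the line through $p_t(g_1),b$ and the line through $p_b(g_1),d$; hence with this $g_1$ neither the top nor the bottom pocket is completely seen by $\{g_1,g_2\}$. Now raising $g_1$ above this position only pushes $p_t$ further right, rotating the line through $p_t,b$ strictly away from $g_2$, so the unseen part of the top pocket only grows; lowering $g_1$ does the same to the bottom pocket. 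Since both pockets are already unseen at the starting position, no placement of $g_1$ on $l$ lets $\{g_1,g_2\}$ cover both pockets, and so none lets it cover $\poly$.

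I expect the main obstacle to lie not in this see-saw argument but in the supporting claims: that the constraint ``$g_2$ weakly left of the line through $p_t(g_1),b$'' tightens monotonically while ``$g_2$ weakly left of the line through $p_b(g_1),d$'' loosens monotonically as $g_1$ rises; that the two constraint boundaries cross exactly at $i(g_1)$ with $\Delta(g_1)$ on the left of both; and that \C is $y$-monotone with endpoints near $b$ and $d$, so that the horizontal projection lands on \C for every relevant $g_2$. Each reduces to a short computation with the parameterizations of $p_t$, $p_b$, and $i$ as functions of the height of $g_1$, together with the monotonicity of $p_t$ and $p_b$; once these are in place, the argument closes as written.
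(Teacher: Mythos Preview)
Your proposal is correct and follows essentially the same route as the paper: the paper's argument (given as informal reasoning in the paragraph immediately preceding the observation) uses exactly the horizontal projection $g_2'$ of $g_2$ onto $\C$, picks the unique $g_1$ with $i(g_1)=g_2'$, and then runs the same two cases with the same see-saw reasoning (raising $g_1$ worsens the top pocket, lowering worsens the bottom). You are more explicit than the paper about the auxiliary facts needed---$y$-monotonicity of $\C$, bijectivity of $g_1\mapsto i(g_1)$, and why being strictly right of the apex at the same height puts $g_2$ on the forbidden side of \emph{both} bounding lines---but these are exactly the unstated ingredients behind the paper's informal sketch.
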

    
  \begin{figure}[htbp]
    \centering
    \includegraphics{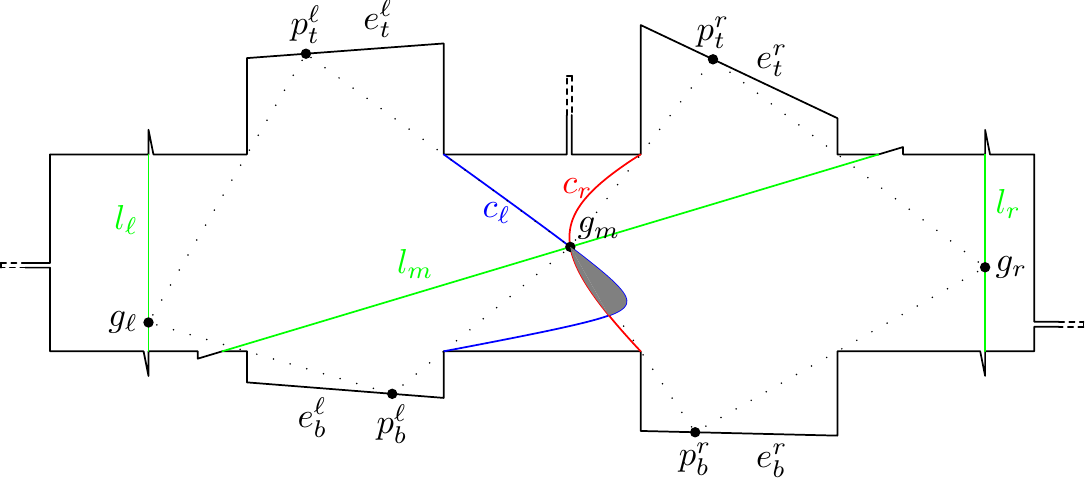}
    \caption{The polygon $\poly$.}
    \label{fig:FullPolygon}
  \end{figure}

    \medskip
  \noindent \textbf{Restricting a Guard to a Single (Irrational) Point.}
  For this paragraph, let us consider the polygon $\poly$ introduced in Section \ref{sec:polygon}, and consider a guard set for $\poly$ consisting of three guards. The polygon $\poly$ is drawn again in Figure~\ref{fig:FullPolygon}, 
  together with additional labels and information.
  The three guards $g_\ell,g_m,g_r$ are forced by the triangular
  pockets to lie on the three 
  green lines $l_\ell, l_m, l_r$, respectively. 
  Additionally, 
  the three rectangular pockets $R_\ell, R_m, R_r$ force the guards to lie 
  within one of three short intervals within each line. 
  (These properties of our construction
  will be discussed in more detail in Section~\ref{proofSec}.)
  With these restrictions, we will show that for the three guards to see the whole polygon, it must hold that the guards $g_\ell$ and $g_m$ can together see the left pockets $P^\ell_t$ and $P^\ell_b$, and the guards $g_m$ and $g_r$ can together see the right pockets $P^r_t$ and $P^r_b$.
  
  Then, the curve $c_\ell$ bounds from the right the feasible region for the guard $g_m$, such that $g_\ell$ and $g_m$ can together see the left pockets $P^\ell_t$ and $P^\ell_b$. Similarly, the curve $c_r$ bounds from the left the feasible region for the guard $g_m$, such that $g_r$ and $g_m$ can together see the right pockets $P^r_t$ and $P^r_b$. Thus, the only way that $g_\ell, g_m$, and $g_r$ can see the whole polygon is when $g_m$ is within the grey region, between $c_\ell$ and $c_r$.
  Our idea is to define the line $l_m$ so that it contains an intersection point 
  of $c_\ell$ and $c_r$, and it does not enter the interior of the grey region.
  A simple computation with sage~\cite{sagemath} outputs equations defining
  the two curves:
  \[c_\ell: 138x^2 - 568xy - 1071y^2 - 3018x + 8828y + 15312 = 0\enspace,\]
  \[c_r: 138x^2 - 156xy - 356y^2 - 1791x + 3296y + 1620 = 0\enspace.\]
  See Appendix~\ref{apx:CurveComputation} for the sage code for this
  computation.
  It can be checked, even by hand, that the point 
    \[p = (3.5 + 5\sqrt{2}, 1.5\sqrt{2}) \approx (10.57, 2.12)\]
    lies on both curves, and also
    on the line 
    $l_m = \{ \, (x,y)\, : \, y=0.3x-1.05 \, \}$.
    Therefore, $p$ is a feasible (and at the same time irrational) position for the guard $g_m$. Moreover, by
    plotting $c_\ell$, $c_r$, and $l_m$ in $\poly$ as in
    Figure~\ref{fig:FullPolygon}, we get an indication
    that as we traverse $l_m$ from left to
    right, at the point $p$
    we exit the area where $g_m$ and $g_l$ can guard together the two left pockets, and at the same time we enter the area
    where $g_m$ and $g_r$ can guard together the two right pockets.
    Thus, the only feasible position for the guard $g_m$ is the irrational point $p$.
    A formal proof will be given in Section~\ref{proofSec}.
  
    \medskip
  \noindent \textbf{Searching for the Polygon.}
  The simplicity of the ideas behind our construction
  does not reflect the difficulty of finding the exact coordinates for the polygon $\poly$.
  The reader might for instance presume
  that most other choices of horizontal pockets would work, if
  the line $l_m$ is changed accordingly. 
  However, this is not the case. 
  
  It is easy to construct the pockets so that the corresponding
  curves $c_\ell$ and $c_r$ intersect at some point~$p$.
  We expect $p$ to be an irrational point in general, since
  the curves $c_\ell$ and $c_r$ are defined by two second degree polynomials,
  as indicated above.
  In our construction, we need to force $g_m$ to be on a line $l_m$ containing $p$, but we can
  only force $g_m$ to be on a rational line.
  Hence, we require the existence of a rational line $l_m$ that contains $p$.
  
  As any two rational lines intersect in a rational point,
  there can be at most one rational line containing the irrational point $p$.
  Moreover, there exists a rational line containing $p$ if and only if
  $p=(r_1 + r_2 \alpha, r_3 + r_4 \alpha)$ for some $r_1,r_2,r_3,r_4 \in \mathbb{Q}$,
  where $\alpha \in \mathbb{R} \setminus \mathbb{Q}$ is an irrational number.
  The equation of the rational line containing $p$ is then
  $y=\frac{r_4}{r_2}\cdot x + (r_3 - r_1 \cdot \frac{r_4}{r_2})$.
  We say that this line \emph{supports} $p$.
Therefore, we should not hope that the intersection point of the curves $c_\ell$ and $c_r$ defined by arbitrarily chosen pockets will have a supporting line.
 Our main idea to overcome this problem
  has been to reverse-engineer the polygon, after having chosen the positions
  of the guards.
  We chose three irrational guards, all with supporting rational lines,
  and then defined the pockets so that $g_m$ automatically
  became the intersection point between the curves $c_\ell$ and $c_r$ associated with
  the pockets.

  We chose all three guards to have coordinates of the form
  $(r_1 + r_2 \sqrt 2, r_3 + r_4 \sqrt 2)$ for $r_1,r_2,r_3,r_4 \in \mathbb{Q}$.
Assume, for the ease of presentation, that we already know that we
can end up with a
polygon described as follows.
(In our initial attempts, our polygons were much less regular.)
  The polygon should consist of the
  rectangle $R=[0,20]\times[0,4]$ with some pockets added.
  We would like the pockets to extrude
  vertically from the horizontal edges of $R$
  such that the pockets meet $R$ along the segments
  $(4,0)(8,0)$, $(12,0)(16,0)$, $(4,4)(8,4)$, and $(12,4)(16,4)$,
  respectively.
  
  We now explain the technique for constructing the bottom pocket to the left which
  should extrude from $R$
  vertically downwards from the corners $(4,0)$ and $(8,0)$.
  We have to define the edge $e_b^\ell$, which is
  the bottom edge in the pocket.
  We want $p_b^\ell$ to be a point on $e_b^\ell$ such that
  $g_\ell$ can only see the part of $e_b^\ell$ from $p_b^\ell$ and to the right,
  whereas
  $g_m$ can only see the part of $e_b^\ell$ from $p_b^\ell$ and to the left.
  Therefore, we define $p_b^\ell$ to be the intersection point between
  the line containing $g_\ell$ and $(4,0)$, and the line containing
  $g_m$ and $(8,0)$. It follows that $p_b^\ell$ is of the
  form $(r_1 + r_2 \sqrt 2, r_3 + r_4 \sqrt 2)$ for some $r_1,r_2,r_3,r_4 \in \mathbb{Q}$.
  Hence, there is a unique rational line $l$
  supporting $p_b^\ell$, and
  $e_b^\ell$ must be a segment on $l$.
  We therefore need that both of the points $(4,0)$ and $(8,0)$
  are above $l$, since otherwise we do not get a meaningful polygon.
  However, this is not the case for arbitrary choices of the guards
  $g_\ell$ and $g_m$.
  The other pockets add similar restrictions to the positions of the guards.
  
  In the construction we had to take care of other issues as well. In particular, the line $l_m$ which supports the guard $g_m$ cannot enter the grey region between the two curves $c_\ell$ and $c_r$, as otherwise the position of $g_m$ would not be unique, and the guard could be moved to a rational point. Also, the three lines $l_\ell, l_m, l_r$ supporting the three guards $g_\ell, g_m, g_r$ cannot intersect within the polygon.
  
  We experimented with the construction in GeoGebra~\cite{gg2}, where we had the
  possibility to draw the lines supporting
  $p_{t}^\ell,p_{b}^\ell,p_{t}^r,p_{b}^r$ and see how they
  were changing in an intricate way as we changed the coordinates of
  the guards.
  For most choices of the guards and other parts of the polygon, we did not
  get meaningful results.
  The great advantage of GeoGebra is that we could continuously vary all parts of the
  polygon and play around with all parameters, thus gaining an intuitive understanding
  of various dependencies. After experimenting for a while, we were able to
  produce feasible examples and then find more appealing examples
  with simpler coordinates etc.
  In particular, it was important to us that many edges of the polygon
  are axis-parallel, so that we could easier derive from
  our example a rectilinear polygon with the same property, i.e., that the optimal guard set requires
  points with irrational coordinates.

\section{Proof of Theorems \ref{thm:3-irrational} and \ref{thm:4-3-bound}}\label{proofSec}

\textbf{Basic observations.}
Recall the construction of the polygon $\poly$ as defined in
Section~\ref{sec:polygon}, and consider a guard set of $\poly$ of cardinality at most $3$.
Let $l_\ell, l_m, l_r$ be, respectively, the restrictions of the following
lines to $\poly$:
$$x=2,\quad y=0.3x-1.05,\quad\text{and}\quad x=19.$$

As argued in Section \ref{sec:intuition}, the triangular pockets enforce a guard onto each of these lines.

\begin{lemma}\label{lem:enforce-lines}
Consider any guard set $S$ for $\poly$ consisting of at most $3$ guards. Then (i) $|S|=3$, and (ii) there is one guard on each of the lines 
$l_\ell, l_m, l_r$. 
\end{lemma}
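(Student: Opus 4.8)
The plan is to show that the three pairs of triangular pockets act independently: each pair $(T^i_t,T^i_b)$, for $i\in\{\ell,m,r\}$, forces a guard of $S$ onto the segment $l_i$, and no single guard can help two different pairs, so at least---hence exactly---three guards are needed, one on each $l_i$. For $q\in\poly$ write $V(q)$ for the set of points of $\poly$ seeing $q$, and call the \emph{apex} of a triangular pocket its vertex that does not lie on a horizontal edge of the basic rectangle; the six apexes are $t_\ell=(2,4.5)$, $t_m=(17\tfrac13,4.15)$, $t_r=(19,4.5)$, $b_\ell=(2,-0.5)$, $b_m=(3,-0.15)$, $b_r=(19,-0.5)$, and any guard set for $\poly$ must see all six of them.

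First I bound each $V(q)$ by a cone. Every pocket other than the one containing $q$ lies outside the open basic rectangle $(0,20)\times(0,4)$, hence obstructs no segment lying in the closed basic rectangle; as the closed rectangle is convex, a segment from a point of $\poly$ to $q$ lies in $\poly$ only if it passes from the rectangle into the pocket of $q$ through that pocket's mouth (its edge shared with the rectangle). Therefore $V(q)\subseteq K(q)$, the closed cone with apex $q$ bounded by the two rays from $q$ through the endpoints of the mouth. A direct computation gives $K(t_\ell)=\{x\ge 2,\ 5x+y\le 14.5\}$ and $K(b_\ell)=\{x\le 2,\ 5x+y\ge 9.5\}$, with $K(t_r),K(b_r)$ the mirror images about $x=19$; and for the middle pair the key feature is that the left bounding ray of $K(t_m)$ and the right bounding ray of $K(b_m)$ both lie on the line $y=0.3x-1.05$, so $K(t_m)\subseteq\{y\le 0.3x-1.05\}$ and $K(b_m)\subseteq\{y\ge 0.3x-1.05\}$.

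Two facts follow. (a) $K(t_\ell)\cap K(b_\ell)\subseteq\{x=2\}$, $K(t_m)\cap K(b_m)\subseteq\{y=0.3x-1.05\}$, and $K(t_r)\cap K(b_r)\subseteq\{x=19\}$; as $l_\ell,l_m,l_r$ are exactly these three lines restricted to $\poly$, a guard seeing both apexes of a pair lies on the corresponding $l_i$. (b) If $p,q$ are apexes of different pairs then $V(p)\cap V(q)=\emptyset$. For this one inspects the twelve intersections $K(p)\cap K(q)$; using $\poly\subseteq\R\times[-\tfrac{12}{7},8]$ together with elementary facts about which parts of $\poly$ reach above $y=4$ or below $y=0$, ten of them are empty or become empty after intersecting with $\poly$. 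The two exceptions, $K(t_m)\cap K(t_\ell)$ and $K(t_m)\cap K(b_\ell)$, meet $\poly$ only in a tiny region of the pocket $T^\ell_b$ near the point $(2,-0.45)$; but from any point of that region the segment toward $t_m$ at once enters $\{x>2,\ y<0\}$, which is disjoint from $\poly$, so those points do not see $t_m$. This proves (b).

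To conclude, let $S$ be a guard set for $\poly$ with $|S|\le 3$, and for $i\in\{\ell,m,r\}$ let $S_i\subseteq S$ be the guards that see at least one apex of pair $i$. By (b) the sets $S_\ell,S_m,S_r$ are pairwise disjoint, and each is nonempty because every apex must be seen; hence $|S|\ge|S_\ell|+|S_m|+|S_r|\ge 3$, so $|S|=3$ and $|S_i|=1$ for each $i$. The unique guard in $S_i$ is then the only guard of $S$ that can see $t_i$ or $b_i$, so it sees both, and by (a) it lies on $l_i$; this gives (i) and (ii). The one delicate point is fact (b): because $V(q)$ is strictly smaller than $K(q)\cap\poly$ near the pocket mouths, one cannot simply claim that the cones are pairwise disjoint on $\poly$, and the two boundary slivers must be excluded by hand via the observation that the relevant sightline leaves $\poly$; everything else is elementary bookkeeping with linear inequalities.
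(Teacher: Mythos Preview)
Your proof is correct and follows the same approach as the paper's, which simply asserts that it is ``straightforward to check'' that the visibility regions of apexes from different pocket-pairs are disjoint and that seeing both apexes of one pair forces a guard onto the corresponding segment; you actually carry out this check explicitly via the bounding cones $K(\cdot)$. One small slip: the set $\{x>2,\ y<0\}$ is not disjoint from $\poly$ (it contains $T^m_b$, $P^\ell_b$, $P^r_b$, $T^r_b$), but your argument only needs that the strip $\{2<x<3,\ y<0\}$ is disjoint from $\poly$, which is true and suffices to block the sightline from the sliver in $T^\ell_b$ to $t_m$.
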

\begin{proof}
Each triangular pocket $T^\ell_t,T^\ell_b, T^m_t, T^m_b, T^r_t, T^r_b$ has one vertex which is not
on the basic rectangle $[0,20] \times [0,4]$.
For each triangular pocket, 
we consider the points in $\poly$
that can see that vertex.
These positions correspond to the areas pictured in yellow and orange
in Figure~\ref{fig:ForceGlobal}.

It is straightforward to check that the only positions of guards that can see two such vertices are on the segments $l_\ell, l_m, l_r$.
Since these segments are non-intersecting,
at least three guards are needed to see the whole polygon $\poly$. If there are three guards, then there must be one guard on
each of the segments $l_\ell, l_m, l_r$.
\end{proof}

Now, consider the intervals $i_1=[0.5,0.6]$ and $i_2=[1.7,1.8]$. Similarly as for the case of triangular pockets, we can show that rectangular pockets $R_\ell, R_m, R_r$ enforce a guard with an x-coordinate in $[10.5,10.6]$, and two remaining guards with y-coordinates in $i_1$ and $i_2$. 

\begin{lemma}\label{lem:enforce-segments}
Consider any guard set $S$ for $\poly$ consisting of $3$ guards. Then one of the guards has an x-coordinate in $[10.5,10.6]$. For the remaining two guards, one has a y-coordinate in $i_1$ and the other one in $i_2$.
\end{lemma}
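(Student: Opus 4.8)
The plan is to start from Lemma~\ref{lem:enforce-lines}: the three guards of $S$ lie, one on each of the segments $l_\ell$ ($x=2$), $l_m$, and $l_r$ ($x=19$); call them $g_\ell,g_m,g_r$. I would then pin each guard down using one rectangular pocket. The structural fact used throughout is that $\poly$ has no interior just outside the basic rectangle $[0,20]\times[0,4]$ except inside the pockets; hence a rectangular pocket, being a long narrow slot, can have its far short edge seen only by a guard whose line of sight enters the slot through its mouth --- a line of sight that would have to pass over the mouth leaves $\poly$, and one aimed at the slot through a neighbouring pocket gets stopped inside that pocket.

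First I would treat the middle pocket $R_m=[10.5,10.6]\times[4,8]$, with mouth $[10.5,10.6]\times\{4\}$ and far edge $f_m=[10.5,10.6]\times\{8\}$. For a guard $g=(x_0,y_0)$ with $y_0\le 4$, the line of sight from $g$ to a point $(a,8)\in f_m$ crosses $y=4$ at abscissa $x_0+(a-x_0)\mu$, where $\mu=\tfrac{4-y_0}{8-y_0}\in[0,1)$; if this abscissa lies outside $[10.5,10.6]$ the segment leaves $\poly$ just above $y=4$ unless it enters $P^\ell_t$ or $P^r_t$, and a slope comparison (lines of sight toward $f_m$ are far steeper than $e^\ell_t$; for $P^r_t$, whose slanted edge $e^r_t$ has negative slope, one instead tracks where the segment crosses $x=12$) shows the segment then leaves that pocket --- and hence $\poly$ --- before it could reach $R_m$. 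Hence the line of sight to $f_m$ must enter $R_m$ through its mouth, which for $a=10.5$ forces $x_0\ge 10.5$ and for $a=10.6$ forces $x_0\le 10.6$. (Placements of $g_m$ inside $T^m_b$ or $T^m_t$ are excluded by the same ``no interior just outside'' observation, so we may assume $y(g_m)\le 4$.) Since $x=2$ and $x=19$ lie outside $[10.5,10.6]$, neither $g_\ell$ nor $g_r$ sees any point of $f_m$, so $g_m$ must see all of $f_m$; therefore $x(g_m)\in[10.5,10.6]$, and $y(g_m)=0.3\,x(g_m)-1.05\in[2.1,2.13]$.

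Next I would treat the side pockets $R_\ell$ and $R_r$, with far edges $f_\ell=\{-10\}\times[1.7,1.8]$ and $f_r=\{30\}\times[0.5,0.6]$. A line of sight from a guard with abscissa $x_0>0$ to a point of $f_\ell$ must cross $R_\ell$'s mouth $\{0\}\times[1.7,1.8]$; writing out this condition, the dependence on $x_0$ cancels in the requirement to see \emph{all} of $f_\ell$, which reduces to the guard's ordinate lying in $[1.7,1.8]$, whereas seeing even one point of $f_\ell$ forces the ordinate into a slightly larger interval --- $[1.68,1.82]$ for a guard on $l_\ell$, $[1.51,1.99]$ for one on $l_r$. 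The symmetric computation for $f_r$ gives $[0.5,0.6]$ for seeing all of $f_r$, and $[0.32,0.78]$ (on $l_\ell$) or $[0.49,0.61]$ (on $l_r$) for seeing one point. Now $y(g_m)\in[2.1,2.13]$ is disjoint from all these intervals, so $g_m$ sees no point of $f_\ell\cup f_r$; and for each of $g_\ell,g_r$ the ``$f_\ell$ point'' interval is disjoint from the ``$f_r$ point'' interval, so neither of them sees a point of $f_\ell$ and a point of $f_r$ simultaneously. Consequently $f_\ell$ cannot be covered jointly by $g_\ell$ and $g_r$ --- that would leave $f_r$ uncovered --- so one of them sees all of $f_\ell$ and hence has ordinate in $i_2=[1.7,1.8]$, and the remaining guard must then cover $f_r$ alone and has ordinate in $i_1=[0.5,0.6]$. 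Together with $x(g_m)\in[10.5,10.6]$, this is exactly the claim.

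The step I expect to be the main obstacle is the blocking argument for $R_m$: one has to check, over the full range of admissible positions of $g_\ell$, $g_r$, and low placements of $g_m$, that a line of sight toward $f_m$ that misses $R_m$'s mouth is genuinely trapped by $P^\ell_t$ or $P^r_t$ (or escapes above the basic rectangle) and does not sneak back into $R_m$ farther along; this is where the slopes of $e^\ell_t$ and $e^r_t$ --- in particular the near-parallelism of $e^r_t$ with some of these lines of sight --- must be handled with care. The computations for the two side pockets are, by contrast, short, precisely because there the $x$-coordinate of the guard drops out.
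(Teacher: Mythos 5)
Your proposal is correct and follows essentially the same route as the paper's proof: invoke Lemma~\ref{lem:enforce-lines} to place one guard on each of $l_\ell,l_m,l_r$, observe that only the middle guard can see the far end of $R_m$ (forcing $x(g_m)\in[10.5,10.6]$ and hence $y(g_m)\in[2.1,2.13]$, which rules $g_m$ out for $R_\ell$ and $R_r$), and then show that no single remaining guard can serve both $R_\ell$ and $R_r$, so one must fully cover each, pinning their $y$-coordinates to $i_2$ and $i_1$. The paper phrases the last step as a short sequential elimination via the pockets' corner vertices where you compute explicit visibility intervals and check disjointness, but this is only a difference in bookkeeping, and your interval computations are accurate.
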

\begin{proof}
From Lemma \ref{lem:enforce-lines}, there must be one guard $g_\ell$ on $l_\ell$, one guard $g_m$ on $l_m$, and the last guard $g_r$ on $l_r$.
Recall that the rectangular pockets are as follows $R_\ell$: $[-10,0]\times [1.7,1.8]$, $R_r$: $[20,30]\times [0.5,0.6]$,
and $R_m$:  $[10.5,10.6]\times [4,8]$. It is straightforward to check that none of the guards $g_\ell, g_r$ can see the two top vertices of the pocket $R_m$. Therefore, the middle guard $g_m$ has to see both these vertices and it must have an x-coordinate in $[10.5,10.6]$. 

Then, as $g_m \in l_m$, the y-coordinate of $g_m$ is in $[2.1,2.13]$. Therefore, $g_m$ cannot see any of the left vertices of $R_\ell$, or any of the right vertices of $R_r$. These four vertices must be seen by the guards $g_\ell$ and $g_r$.

As some guard must see the bottom-left corner of the pocket $R_\ell$, it must be placed at a height of at least $1.7$. Then, this guard cannot see any of the right vertices of $R_r$. Therefore, the last guard must see both right vertices of $R_r$, and its height must be within $i_1=[0.5,0.6]$. Then, this guard cannot see any left vertex of the pocket $R_\ell$, and the second guard must see both left vertices of the pocket, and its height must be within $i_2=[1.7,1.8]$. 
\end{proof}

\medskip
\noindent
\textbf{Dependencies between guard positions.}
Let $\{g_\ell, g_m, g_r\}$ be a guard set of $\poly$, with $g_\ell \in l_\ell, g_m \in l_m$, and $g_r \in l_r$. We will now analyze dependencies between the positions of the guards that are caused by the horizontal pockets of $\poly$.
Recall that the non-axis-parallel edges of these pockets are denoted by $e^\ell_t$, $e^r_t$, $e^\ell_b$, and $e^r_b$.


\begin{figure}[htbp]
 \centering
 \includegraphics{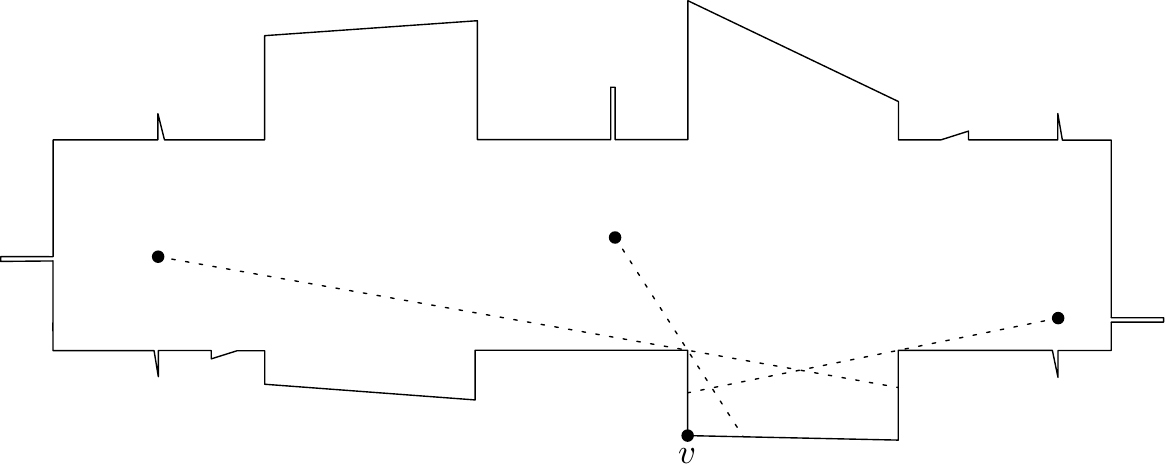}
 \caption{If the guard $g_r$ guards the wrong pocket, no guard can see the vertex $v$.}\label{fig:grWrongInterval}
\end{figure}

\begin{lemma}\label{lem:GuardIntervals}
   The $y$-coordinate of guard $g_\ell$ is in the interval $i_1 = [0.5 , 0.6]$ and the $y$-coordinate of guard $g_r$ is in the interval $i_2 = [1.7,1.8]$
\end{lemma}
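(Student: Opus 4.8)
The plan is to use the two preceding lemmas to reduce the statement to a single claim about one vertex of $\poly$, and then to settle that claim with three short line-of-sight computations. By Lemma~\ref{lem:enforce-lines} I may name the guards $g_\ell\in l_\ell$, $g_m\in l_m$, $g_r\in l_r$, and by Lemma~\ref{lem:enforce-segments} the guard $g_m$ has $x$-coordinate in $[10.5,10.6]$, hence (lying on $l_m$) $y$-coordinate in $[2.1,2.13]$, while $g_\ell$ and $g_r$ have $y$-coordinates lying one in $i_1$ and one in $i_2$. So it suffices to show that the $y$-coordinate of $g_r$ cannot lie in $i_1$; then it lies in $i_2$, and consequently the $y$-coordinate of $g_\ell$ lies in $i_1$, which is the assertion.

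\noindent For this I will focus on the vertex $v=\bigl(12,-\tfrac{34}{21}\bigr)$ of $\poly$, the lower-left vertex of the bottom-right quadrilateral pocket $P^r_b$ (the vertex marked in Figure~\ref{fig:grWrongInterval}). The one geometric fact I will exploit is that the lower boundary of $\poly$ equals the segment $y=0$ on the whole interval $8\le x\le 12$ (between the right end $x=8$ of $P^\ell_b$ and the left end $x=12$ of $P^r_b$ there is no bottom pocket) and also just to the right of $x=16$ (between $P^r_b$ and the triangular pocket $T^r_b$ near $x=19$); hence any chord of $\poly$ reaching $v$ whose height is negative at some $x\in(8,12)$, or just above $x=16$, must leave $\poly$ there. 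For each guard I will parameterise the line through that guard and $v$, note that its height along the chord is a ratio of two affine functions of the guard's single free coordinate and hence monotone over the relevant small interval (so that testing the endpoints suffices), and evaluate it at a convenient $x$. I expect: for $g_\ell=(2,y_\ell)$ the height of $g_\ell v$ at $x=8$ to equal $\tfrac{2}{5}y_\ell-\tfrac{34}{35}$, which is negative whenever $y_\ell<\tfrac{17}{7}$ and in particular for every admissible $y_\ell\le 1.8$, so $g_\ell v$ leaves $\poly$ just to the right of $x=8$ and $g_\ell$ never sees $v$; for $g_m$ (with $x$-coordinate $x_m\in[10.5,10.6]$ and $y$-coordinate $0.3x_m-1.05$) the height of $g_m v$ at $x=11.5$ to be negative at both $x_m=10.5$ and $x_m=10.6$, hence for all admissible $g_m$, so $g_m v$ leaves $\poly$ in the gap $(10.6,12)$ and $g_m$ never sees $v$; and for $g_r=(19,y_r)$ the height of $g_r v$ at $x=16$ to equal $\tfrac{28y_r-34}{49}$, which is negative precisely when $y_r<\tfrac{17}{14}$, so $g_r$ can see $v$ only if $y_r\ge\tfrac{17}{14}>0.6$. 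Since $v$ must be seen by some guard and neither $g_\ell$ nor $g_m$ can, $g_r$ must see $v$, so its $y$-coordinate is at least $\tfrac{17}{14}$ and therefore does not lie in $i_1$ — exactly what remained to be shown.

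\noindent The hard part will be the three visibility computations in the second paragraph. They are elementary, but they take place at three different locations in $\poly$ and rely on the precise rational coordinates of $P^r_b$, $P^\ell_b$ and of the line $l_m$; moreover the relevant inequalities must be checked over the whole small ranges of admissible guard positions, not merely at sample points, which is the reason for the monotonicity observation above. I do not anticipate any conceptual difficulty, only the need to carry out the arithmetic with the specific fractions accurately; Figure~\ref{fig:grWrongInterval} is essentially a picture of this argument.
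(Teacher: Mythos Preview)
Your proposal is correct and follows essentially the same argument as the paper: both reduce the claim (via Lemmas~\ref{lem:enforce-lines} and~\ref{lem:enforce-segments}) to showing that the lower-left vertex $v=(12,-\tfrac{34}{21})$ of $P^r_b$ cannot be seen by $g_\ell$, $g_m$, or a guard $g_r$ with $y$-coordinate in $i_1$, forcing $g_r$ into $i_2$. The paper simply asserts these three visibility facts with reference to Figure~\ref{fig:grWrongInterval}, whereas you carry out the explicit line-of-sight computations; your monotonicity remark correctly justifies checking only the interval endpoints, and the arithmetic is accurate.
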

\begin{proof}
  It is clear that one of the $y$-coordinate of $g_r$ is either in the interval $i_1$ or $i_2$ by the arguments given above. In case that the $y$-coordinate of $g_r$ is in $i_1$, it is easily seen that $g_r$ cannot see the bottom left corner $v$ of the bottom right point, see Figure~\ref{fig:grWrongInterval}.
  None of the other guards can see $v$ either --- a contradiction.
\end{proof}

\begin{restatable}{lem}{IndependentPockets}
  \label{lem:left-right-only}
    The guards $g_\ell$ and $g_m$ together see all of $e^\ell_t$ and $e^\ell_b$, and the guards $g_m$ and $g_r$ together can see all of $e^r_t$ and $e^r_b$.
  \end{restatable}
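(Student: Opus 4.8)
By Lemmas~\ref{lem:enforce-lines}, \ref{lem:enforce-segments} and~\ref{lem:GuardIntervals} we may write $g_\ell=(2,y_\ell)$ with $y_\ell\in[0.5,0.6]$, $g_m=(x_m,0.3x_m-1.05)$ with $x_m\in[10.5,10.6]$ (so its $y$-coordinate $y_m$ lies in $[2.1,2.13]$), and $g_r=(19,y_r)$ with $y_r\in[1.7,1.8]$. Since $\{g_\ell,g_m,g_r\}$ guards \poly, every point of $e^\ell_t\cup e^\ell_b$ is seen by one of the three guards, so for the first half of the lemma it suffices to prove: (a) $g_r$ sees no point of $e^\ell_t$, and (b) every point of $e^\ell_b$ seen by $g_r$ is also seen by $g_m$. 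The second half follows from the symmetric claims, with $g_\ell$ playing the role of $g_r$: (a$'$) $g_\ell$ sees no point of $e^r_b$, and (b$'$) every point of $e^r_t$ seen by $g_\ell$ is also seen by $g_m$.

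\textbf{Claims (a), (a$'$): a side guard cannot reach the ``far'' slanted edge.} For (a), take $q=(q_x,q_y)\in e^\ell_t$, so $q_x\in[4,8]$ and $q_y>4$; since $g_r$ lies below $y=4$, the segment $g_rq$ crosses the line $y=4$ in a unique point $(x^\ast,4)$, and $x^\ast$ is a convex combination of $q_x$ and $19$ whose weights depend only on $q_y$ and $y_r$. Plugging in the endpoints $(4,\tfrac{280}{47})$, $(8,\tfrac{294}{47})$ of $e^\ell_t$ and the range $y_r\in[1.7,1.8]$, one checks that $x^\ast\in(10.6,16)$ for every admissible choice, so in particular $x^\ast$ misses the slit $R_m$. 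If $x^\ast<12$, the ceiling of \poly\ is solid at $(x^\ast,4)$ and the segment leaves \poly\ there; if $x^\ast\ge 12$, the segment enters the pocket $P^r_t$ through its mouth heading up and to the left, and a second short estimate shows that at $x=12$ its height lies strictly between $4$ and $\tfrac{2486}{375}$, so it meets the left wall of $P^r_t$. In both cases $g_r$ does not see $q$. Claim (a$'$) is proved the same way: a segment from $g_\ell$ to a point of $e^r_b$ crosses $y=0$ at an $x$-coordinate that a computation places inside the mouth $[4,8]\times\{0\}$ of $P^\ell_b$, so the segment exits $P^\ell_b$ (through its right wall or its slanted edge) before reaching $x=8$, hence never reaches $q$.

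\textbf{Claims (b), (b$'$): the middle guard dominates the side guard on the ``near'' slanted edge.} For (b), fix $q=(q_x,q_y)\in e^\ell_b$, so $q_x\in[4,8]$ and $q_y<0$. For a guard at a point $(X,Y)$ with $Y>0$ and $X>q_x$, the segment to $q$ meets the line $y=0$ at the point with first coordinate
\[
 f(X,Y)=\frac{Y\,q_x-q_y\,X}{Y-q_y},
\]
which is a strict convex combination of $q_x$ and $X$, and satisfies $\partial f/\partial X>0$ and $\partial f/\partial Y<0$. Since $19>x_m$ and $y_r\le 1.8<2.1\le y_m$, this gives $f(g_r)\ge f(g_m)$ for every such $q$. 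Now if $g_r$ sees $q$, then $g_rq$ must enter \poly\ below the floor through the mouth of $P^\ell_b$, i.e.\ $f(g_r)\le 8$; hence $f(g_m)\le 8$ as well, and since $f(g_m)>q_x\ge 4$ the segment $g_mq$ enters through the same mouth. A routine check --- the part of $g_mq$ above the floor stays in the basic rectangle, and the part below the floor stays inside $P^\ell_b$ --- then shows that $g_m$ sees $q$. For (b$'$) one runs the analogous argument with the line $y=4$ in place of $y=0$ and $e^r_t$ in place of $e^\ell_b$; here $g_\ell$ lies both to the left of and below $g_m$, so the two monotonicities of the crossing coordinate pull in opposite directions, but a one-line algebraic comparison shows that the much larger $x$-coordinate of $g_m$ dominates, so the crossing point of $g_mq$ is at least as far to the right as that of $g_\ell q$, and the argument finishes as before.

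\textbf{Expected main obstacle.} The work is concentrated in the quantitative steps above: bounding, uniformly over the admissible ranges of $y_\ell$, $x_m$, $y_r$, the first coordinate at which a guard-to-edge segment meets $y=0$ or $y=4$, and determining which part of the boundary of \poly\ it lands on. These inequalities are elementary but not comfortably slack --- for instance, whether $g_r$ can see the left endpoint of $e^\ell_b$ at all depends on whether $y_r$ is near $1.7$ or near $1.8$ --- so one must use the correct interval endpoints in every estimate, and it is cleanest to carry the computations out symbolically, as elsewhere in the paper.
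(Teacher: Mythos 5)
Your proposal is correct and follows essentially the same route as the paper: both reduce the lemma to showing that whatever $g_r$ sees of $e^\ell_t,e^\ell_b$ is subsumed by $g_m$ (and symmetrically for $g_\ell$ and the right pockets), the paper then verifying these two domination statements by inspection of figures where you instead sketch explicit interval estimates for the crossing points with $y=0$ and $y=4$. Your quantitative claims (e.g.\ the crossing of $g_rq$ with $y=4$ landing in $(10.6,16)$ for $q\in e^\ell_t$, and the monotonicity argument for $f$) check out against the stated coordinates, so this is a more detailed instantiation of the same argument rather than a different one.
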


\begin{proof}
    By the construction of $\poly$, it holds that if a guard sees a point on one of the edges
    $e^\ell_t$, $e^r_t$, $e^\ell_b$, and $e^r_b$, then the guard sees an interval
    of the edge containing an endpoint of the edge. It now follows that if three guards together see
    one of these edges, then two do as well.
    Also note that it is impossible for a single guard to see either of the edges entirely.
    In order to prove the lemma, it thus suffices to prove that
\begin{itemize}
\item
    The part of $e^\ell_t$, $e^\ell_b$ that is seen by $g_r$ is also seen by $g_m$.
   
\item
    The part of $e^r_t$, $e^r_b$ that is seen by $g_\ell$ is also seen by $g_m$.
    \end{itemize}
   The Lemma follows from the two statements above. The two statements can be easily checked in Figure~\ref{fig:RightGuardObsolete}~and~\ref{fig:LeftGuardObsolete}.
\end{proof}

 \begin{figure}[t]
  \centering
   \begin{minipage}{.49\textwidth}
    \centering
    \includegraphics{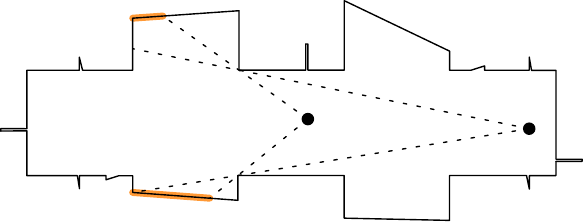}
    \subcaption{The part of $e^\ell_t$, $e^\ell_b$ that is seen by $g_r$ is also seen by $g_m$.}\label{fig:RightGuardObsolete}
  \end{minipage}
  \hfill
  \begin{minipage}{.49\textwidth}
    \centering
    \includegraphics{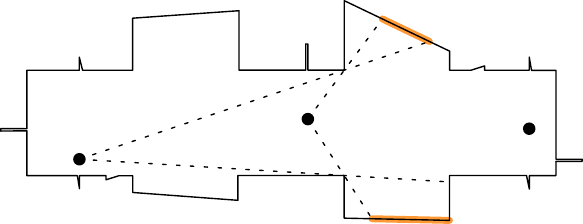}
    \subcaption{The part of $e^r_t$, $e^r_b$ that is seen by $g_\ell$ is also seen by $g_m$.}\label{fig:LeftGuardObsolete}
  \end{minipage}
  \caption{The left guard os not helpful to guard the right pockets and the right guard is not helpful to guard the left pockets.}
  \label{fig:ObsoleteGuards}
\end{figure}

\medskip
\noindent
\textbf{Computing the unique solution.} 
We can now show that there is only one guard set for $\poly$ consisting of three guards. Let us start by computing the right-most possible position of $g_m$ such that $g_\ell$ and $g_m$ can see together both left pockets.

For the next two lemmas, recall the notation from Figure~\ref{fig:FullPolygon}.

\begin{lemma}\label{lem:left-pockets}
The maximum $x$-coordinate of $g_m$ such that $g_\ell$ and $g_m$ can together see $e_t^\ell$ and $e_b^\ell$ is $x=3.5 + 5 \sqrt{2}$.
The corresponding position of $g_\ell$ is $(2,2-\sqrt{2})$.
\end{lemma}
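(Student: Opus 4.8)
The plan is to reduce the statement to a two--variable optimization problem and then solve it. By Lemmas~\ref{lem:enforce-lines}, \ref{lem:enforce-segments} and \ref{lem:GuardIntervals} we may write $g_\ell=(2,y_\ell)$ with $y_\ell\in[0.5,0.6]$ and $g_m=\bigl(x_m,\,0.3x_m-1.05\bigr)$ with $x_m\in[10.5,10.6]$, so the task is to determine, over this parameter box, the largest $x_m$ for which $g_\ell$ and $g_m$ jointly see both $e^\ell_t$ and $e^\ell_b$, and to show the maximiser is unique. I would first turn ``jointly see $e^\ell_t$'' and ``jointly see $e^\ell_b$'' into two explicit inequalities relating $x_m$ and $y_\ell$, and then optimize.

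For the visibility structure I would use the two ``windows'' $(4,4)(8,4)$ and $(4,0)(8,0)$ through which the left pockets are reached. Recall from the proof of Lemma~\ref{lem:left-right-only} that each guard which sees a point of one of these edges sees a sub-interval of the edge containing one of its endpoints; so it suffices to identify, for each guard and each edge, which endpoint and which threshold point occur. A short case analysis of which window--corner is the binding obstruction shows that, uniformly over the box, $g_\ell$ sees exactly the part of $e^\ell_t$ from its right endpoint back to the point $p_t(y_\ell)$ where the line through $g_\ell$ and the corner $(4,4)$ meets the line carrying $e^\ell_t$, while $g_m$ sees exactly the part of $e^\ell_t$ from its left endpoint up to the point $q_t(x_m)$ where the line through $g_m$ and the corner $(8,4)$ meets that line; symmetrically, $g_\ell$ sees $e^\ell_b$ from its right endpoint back to $p_b(y_\ell)$ (line through $g_\ell$ and $(4,0)$), and $g_m$ sees $e^\ell_b$ from its left endpoint up to $q_b(x_m)$ (line through $g_m$ and $(8,0)$). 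One also checks that the vertical walls and axis--parallel edges of the thin quadrilateral pockets obstruct nothing else. Comparing $x$-coordinates, $g_\ell$ and $g_m$ jointly see $e^\ell_t$ iff $q_t(x_m)\ge p_t(y_\ell)$, and jointly see $e^\ell_b$ iff $q_b(x_m)\ge p_b(y_\ell)$; all four quantities are explicit rational functions of one variable, obtained by intersecting the relevant lines.

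For the optimization, a direct computation with these explicit functions shows that $p_t$ is strictly increasing in $y_\ell$ and $q_t$ strictly decreasing in $x_m$, while $p_b$ is strictly decreasing in $y_\ell$ and $q_b$ strictly decreasing in $x_m$ on the relevant intervals. Hence for fixed $y_\ell$ the $e^\ell_t$-condition reads $x_m\le F_t(y_\ell)$ with $F_t$ strictly decreasing, and the $e^\ell_b$-condition reads $x_m\le F_b(y_\ell)$ with $F_b$ strictly increasing. Therefore $\max_{y_\ell}\min\{F_t(y_\ell),F_b(y_\ell)\}$ is attained at the unique $y_\ell^{*}$ with $F_t(y_\ell^{*})=F_b(y_\ell^{*})$, which turns out to lie in the interior of $[0.5,0.6]$, the maximal value of $x_m$ equals this common value, and at this maximal $x_m$ the only admissible $y_\ell$ is $y_\ell^{*}$ (for any other $y_\ell$, one of $F_t,F_b$ is strictly below the common value, so one of the two edges is not covered). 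Thus the maximiser is the unique solution $(x_m,y_\ell)$ inside the box of the system $q_t(x_m)=p_t(y_\ell)$, $q_b(x_m)=p_b(y_\ell)$.

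It remains to solve this system. Here I would exploit the way $\poly$ was built: $e^\ell_b$ lies on the (rational) line through the intersection point of the line through $g_\ell$ and $(4,0)$ with the line through $g_m$ and $(8,0)$, and $e^\ell_t$ lies on the (rational) line through the intersection of the line through $g_\ell$ and $(4,4)$ with the line through $g_m$ and $(8,4)$, precisely when $g_\ell=(2,\,2-\sqrt{2})$ and $g_m=(3.5+5\sqrt{2},\,1.5\sqrt{2})$. For these positions the equalities $q_t(x_m)=p_t(y_\ell)$ and $q_b(x_m)=p_b(y_\ell)$ therefore hold by construction; alternatively one checks them directly by substituting $(x_m,y_\ell)=(3.5+5\sqrt2,\,2-\sqrt2)$ into the explicit formulas for $p_t,q_t,p_b,q_b$, keeping $\sqrt2$ symbolic. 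Finally one verifies $2-\sqrt2\in[0.5,0.6]$ and $3.5+5\sqrt2\in[10.5,10.6]$, and that the spurious second root of the quadratic obtained after eliminating $y_\ell$ falls outside the box; together with the previous paragraph this proves both assertions of the lemma. I expect the main obstacle to be the second paragraph: establishing, uniformly over the parameter box, exactly which window--corner blocks each guard from each edge and that nothing else interferes, so that the two clean inequalities are correct. The sign bookkeeping in the optimization is routine, and the closing algebra is essentially forced by the reverse--engineering of the polygon.
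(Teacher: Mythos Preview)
Your approach is essentially the same as the paper's: parametrize $g_\ell=(2,h)$ with $h\in[0.5,0.6]$, derive from each of the two left pockets an upper bound on the $x$-coordinate of $g_m$ along $l_m$ as a rational function of $h$, observe that one bound is strictly decreasing and the other strictly increasing in $h$, and conclude that the optimum occurs where the two bounds meet, yielding $h=2-\sqrt 2$ and $x=3.5+5\sqrt 2$. The paper carries out the explicit algebra (computing $p_t^\ell$, $p_b^\ell$, the two lines through $(8,4)$ and $(8,0)$, and the resulting inequalities for $x$), whereas you describe the same steps more abstractly via threshold points $p_t,q_t,p_b,q_b$ and the functions $F_t,F_b$; the logical skeleton is identical.
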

\begin{proof}
Consider the guard $g_\ell$ at position $(2,h)$.
From Lemma~\ref{lem:GuardIntervals}, we know that $h\in i_1=[0.5,0.6]$.
We can easily compute \[p^\ell_t = \left(\frac{908-188h}{181-47h},\frac{7}{94} \cdot \frac{908-188h}{181-47h} + \frac{266}{47}\right)\]
If $g_m$ and $g_\ell$ together see $e_t^\ell$,
we know from Lemma~\ref{lem:left-right-only} that
$g_m$ has to be on or below the line  containing the vertices $(8,4)$ and $p_t^\ell$, i.e., the line with equation 
\[y=\frac{92-23 h}{-135 + 47 h} x + \frac{-1276+372 h}{-135 + 47 h}.\] 
As $g_m$ is on the line $l_m$ described by $y=0.3x-1.05$, its $x$-coordinate satisfies 
\[0.3x-1.05 \le \frac{92-23 h}{-135 + 47 h} x + \frac{-1276+372 h}{-135 + 47 h},\] i.e., \[x \le \frac{28355-8427 h}{2650 - 742 h}.\]


If $g_m$ and $g_\ell$ together see $e_b^\ell$, then $g_m$ has to be on or above the line containing the vertices $(8,0)$ and 
\[p_b^\ell =\left(\frac{76h+12}{19h-3},-\frac{3}{38} \cdot \frac{76h+12}{19h-3} - \frac{6}{19}\right),\]
i.e., the line with equation 
\[y=\frac{3h}{19 h - 9} x - \frac{24h}{19 h - 9}.\] 
Hence, the $x$-coordinate of $g_\ell$ must satisfy 
\[0.3x-1.05 \ge \frac{3h}{19 h - 9} x - \frac{24h}{19 h - 9},\] 
i.e., \[x(1-h) \le \frac{81 h + 189}{54}.\]
Therefore, since $h<1$, we must have
\[x \le \frac{81 h + 189}{54 - 54h}.\] 

We now know that
\[x\leq
\min \left\{\frac{28355-8427 h}{2650 - 742 h},  \frac{81 h + 189}{54- 54 h}\right\}.\]
The first of the two values decreases with $h$, and the second one increases with $h$. Therefore the maximum is obtained when \[\frac{28355-8427 h}{2650 - 742 h} = \frac{81 h + 189}{54- 54 h},\] i.e., for $h=2-\sqrt{2}$. The value of $x$ is then $3.5 + 5 \sqrt{2}$. The corresponding position of the guard $g_\ell$ is $(2,h)=(2,2-\sqrt{2})$.
\end{proof}

Similarly, we can compute the left-most possible position of $g_m$ such that $g_m$ and $g_r$ can see together both right pockets. The proof is in the appendix.

\begin{lemma}\label{lem:right-pockets}
The minimum $x$-coordinate of $g_m$ such that $g_r$ and $g_m$ can see both
$e_t^r$ and $e_b^r$ is $x=3.5 + 5 \sqrt{2}$. The corresponding position of $g_r$ is $(19,1+\frac{\sqrt{2}}{2})$.
\end{lemma}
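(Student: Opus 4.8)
The plan is to mirror the proof of Lemma~\ref{lem:left-pockets}, interchanging the roles of left and right and of the guards $g_\ell$ and $g_r$. Write $g_r=(19,h)$; by Lemma~\ref{lem:GuardIntervals} we may assume $h\in i_2=[1.7,1.8]$. By Lemma~\ref{lem:left-right-only}, the guards $g_m$ and $g_r$ together must see all of $e^r_t$ and all of $e^r_b$, and by the construction of the pockets each of them can see only a sub-interval of such an edge that contains one of its endpoints; hence, for each of the two edges, $g_r$ sees the end-part at one endpoint and $g_m$ sees the end-part at the other.

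\textbf{The two pocket constraints.} Consider first the top-right pocket, whose non-axis-parallel edge $e^r_t$ runs from $(12,\frac{2486}{375})$ to $(16,\frac{1776}{375})$. Let $p^r_t$ be the intersection of $e^r_t$ with the line through $g_r=(19,h)$ and the vertex $(16,4)$; one checks that $(16,4)$ is the vertex that blocks $g_r$'s view into this pocket, so $g_r$ sees exactly the part of $e^r_t$ from $(12,\frac{2486}{375})$ to $p^r_t$. Thus $g_m$ must see the part from $p^r_t$ to $(16,\frac{1776}{375})$; since (by Lemma~\ref{lem:enforce-segments}) the $x$-coordinate of $g_m$ lies in $[10.5,10.6]$, so $g_m$ is to the left of the pocket, the relevant blocking vertex for $g_m$ is $(12,4)$, and $g_m$ sees the required part if and only if $g_m$ lies on or below the line through $(12,4)$ and $p^r_t$. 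Intersecting this half-plane with $l_m$, the line given by $y=0.3x-1.05$, yields — exactly as in Lemma~\ref{lem:left-pockets} — a bound $x\ge\alpha(h)$ with $\alpha$ an explicit rational function. By the symmetric argument for the bottom-right pocket (edge $e^r_b$ from $(12,-\frac{34}{21})$ to $(16,-\frac{36}{21})$), with $p^r_b$ the intersection of $e^r_b$ and the line through $g_r$ and $(16,0)$, and with the vertices $(16,0)$ and $(12,0)$ playing the roles of $(16,4)$ and $(12,4)$, one obtains a second bound $x\ge\beta(h)$ with $\beta$ an explicit rational function (the inequality now being ``on or above''). Consequently $g_m$ and $g_r$ together see both right pockets if and only if the $x$-coordinate of $g_m$ satisfies $x\ge\max\{\alpha(h),\beta(h)\}$, and the quantity to determine is $\min_{h\in[1.7,1.8]}\max\{\alpha(h),\beta(h)\}$.

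\textbf{The optimization.} I would then check that $\alpha$ is increasing and $\beta$ is decreasing on $[1.7,1.8]$; this matches the qualitative behaviour described in Section~\ref{sec:intuition}, since raising $g_r$ lets it see more of the top-right pocket (so $\alpha$ relaxes) and less of the bottom-right pocket (so $\beta$ tightens). Hence the minimum of the maximum is attained where $\alpha(h)=\beta(h)$; solving this single equation gives $h=1+\frac{\sqrt2}{2}\in[1.7,1.8]$, and substituting back gives $x=3.5+5\sqrt2$ and $g_r=(19,1+\frac{\sqrt2}{2})$. As a sanity check, the resulting value $3.5+5\sqrt2$ is exactly the one from Lemma~\ref{lem:left-pockets}, which is how the polygon was engineered: the line $l_m$ passes through the common point $p=(3.5+5\sqrt2,\,1.5\sqrt2)$ of the curves $c_\ell$ and $c_r$.

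I expect the only real obstacle to be bookkeeping, not any new idea: one has to identify correctly which vertex blocks which guard for each pocket, get the direction of each of the four resulting linear inequalities right, and then push through the elementary but lengthy algebra that puts $\alpha$ and $\beta$ in closed form, verifies their monotonicity on $[1.7,1.8]$, and solves $\alpha=\beta$. Every one of these steps is the mirror image of a computation already carried out in the proof of Lemma~\ref{lem:left-pockets}.
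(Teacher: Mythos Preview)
Your proposal is correct and follows exactly the paper's proof: set $g_r=(19,h)$ with $h\in[1.7,1.8]$, derive from each right pocket a lower bound on the $x$-coordinate of $g_m$ as a rational function of $h$ (the paper obtains $\alpha(h)=\frac{490h-243}{20h+22}$ and $\beta(h)=\frac{34h-7}{4h-2}$), observe the opposite monotonicities, and solve $\alpha(h)=\beta(h)$ to get $h=1+\tfrac{\sqrt2}{2}$ and $x=3.5+5\sqrt2$.

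One small slip: your parenthetical intuition is backwards and contradicts your own (correct) monotonicity claim. Raising $g_r$ makes it see \emph{less} of $e^r_t$ (the ray through $(16,4)$ becomes less steep, so $p^r_t$ moves left), hence $\alpha$ \emph{tightens}, i.e.\ increases; symmetrically it sees more of $e^r_b$, so $\beta$ decreases. The conclusions you state are right; only the heuristic justification needs to be flipped.
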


\begin{proof}
Consider the guard $g_r$ at position $(19,h)$.
From Lemma~\ref{lem:GuardIntervals}, we know that $h\in i_2=[1.7,1.8]$.
If $g_m$ and $g_r$ together see $e_t^r$,
we know from Lemma~\ref{lem:left-right-only}
that
$g_m$ has to be on or below the line containing the vertices $(12,4)$ and 
\[ p^r_t = \left(\frac{4000 h - 9768}{250 h - 645}, -\frac{71}{150}\frac{4000 h - 9768}{250 h - 645}+\frac{4616}{375}\right),\]
i.e., the line with equation
\[y=\frac{46 h - 184}{250 h - 507} x + \frac{448 h + 180}{250 h - 507}.\] 
As $g_m$ is at the line  $y=0.3x-1.05$, its $x$ coordinate satisfies: 
\[0.3x-1.05 \le \frac{46 h - 184}{250 h - 507} x + \frac{448 h + 180}{250 h - 507},\] 
i.e., \[x \ge \frac{490 h - 243}{20 h + 22}.\]

If $g_m$ and $g_r$ together see $e_b^r$, then $g_m$ has to be on or above the line containing the vertices $(12,0)$ and
\[p_b^r = \left(\frac{224 h - 56}{14 h + 1}, -\frac{1}{42} \frac{224 h - 56}{14 h + 1}-\frac{4}{3}\right),\]  
i.e., the line with equation
\[y=\frac{6h}{17 - 14h} x - \frac{72h}{17 - 14h}.\] Hence, the $x$-coordinate of $g_r$ must satisfy 
\[0.3x-1.05 \ge \frac{6h}{17 - 14h} x - \frac{72h}{17 - 14h},\] 
i.e.,  $x \ge \frac{34 h - 7}{4 h -2}$. 

We have to minimize the value of \[\max \left\{\frac{490 h - 243}{20 h + 22}, \frac{34 h - 7}{4 h -2}\right\}.\] When the value of $h$ increases, the first of these two values increases, and the second one decreases. The minimum value is therefore obtained when 
\[\frac{490 h - 243}{20 h + 22} = \frac{34 h - 7}{4 h -2},\] i.e., for $h=1+\frac{\sqrt{2}}{2}$. The value of $x$ is then $3.5 + 5 \sqrt{2}$.
\end{proof}

We are now ready to prove our main theorems.

\begin{proof}[Proof of Theorem~\ref{thm:3-irrational}]
Let $\poly$ be the polygon constructed as in Section \ref{sec:polygon}, and let $S$ be a guard set for $\poly$ consisting of at most $3$ guards. From Lemma \ref{lem:enforce-lines} we have $|S|=3$, and there is one guard at each of the lines $l_\ell, l_m, l_r$. Denote these guards by  $g_\ell, g_m, g_r$, respectively.
From Lemma~\ref{lem:left-right-only} we know that if $g_\ell$, $g_m$, and $g_r$
together see all of $\poly$, then $g_\ell$ and $g_m$ must see all of
$e_t^\ell$ and $e_b^\ell$, and $g_m$ and $g_r$ must see all of
$e_t^r$ and $e_b^r$. It then follows from Lemmas~\ref{lem:left-pockets} and~\ref{lem:right-pockets} that $g_m$ must have coordinates
$(3.5 + 5 \sqrt{2}, 1.5 \sqrt{2})\approx (10.57,2.12)$,
$g_\ell=(2,2-\sqrt{2})\approx (2,0.59)$, and
$g_r = (19,1+\frac{\sqrt{2}}{2})\approx (19,1.71)$.
Thus, indeed, the guards $g_\ell$, $g_m$, and $g_r$
see the entire polygon $\poly$ and are the only three guards doing so.

By scaling $\poly$ up by the least common multiple of the denominators in the
coordinates of the corners of $\poly$,
we obtain a polygon with integer coordinates. This does not affect
the number of guards required to see all of $\poly$.

In order to guard $\poly$ using four guards with rational coordinates, we
choose two rational guards $g'_{m,1}$ and $g'_{m,2}$ on $l_m$ a little
bit to the left and to the right of $g_m$, respectively.
The guard $g'_{m,1}$ sees a little more of both of the edges $e^\ell_t$
and $e^\ell_b$ than does $g_m$, whereas $g'_{m,2}$ sees a little more of
$e^r_t$ and $e^r_b$. Therefore, we can choose
a rational guard $g'_\ell$ on
$l_\ell$ close to $g_\ell$ such that
$g'_\ell$ and $g'_{m,1}$ together see
$e^\ell_t$ and $e^\ell_b$, and a rational
guard $g'_r$ on $l_r$ with analogous properties.
Thus, $g'_\ell,g'_{m,1},g'_{m,2},g'_r$ guard $\poly$.
\end{proof}

\begin{figure}[htbp]
\centering
\includegraphics[width=.8\textwidth]{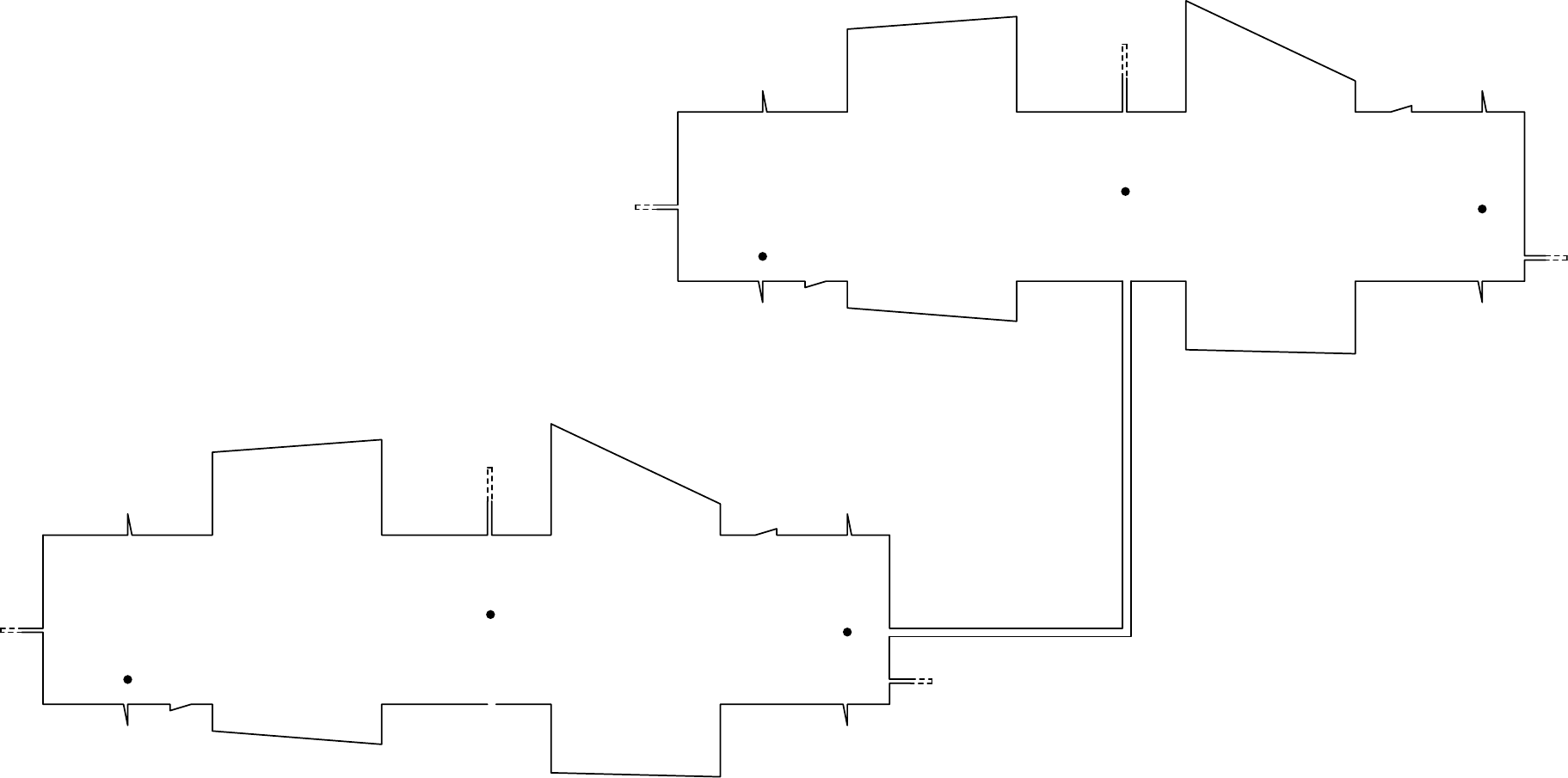}
\caption{A sketch of a polygon that can be guarded by $6$ guards when
irrational coordinates are allowed, but needs $8$ guards when
only rational coordinates are allowed.}
\label{fig43}
\end{figure}

\begin{proof}[Proof of Theorem~\ref{thm:4-3-bound}]
We will now construct a polygon $\poly_n$ that
can be guarded by $3n$ guards placed at points with irrational coordinates,
but such that when we restrict guard positions to points with rational coordinates,
the minimum number of guards becomes $4n$.
We start by making $n$ copies of the polygon $\poly$ described above,
which we denote by $\poly^{(1)},\ldots,\poly^{(n)}$.
We connect the copies into one polygon $\poly_n$ as follows.
Each consecutive pair $\poly^{(i)},\poly^{(i+1)}$
is connected by a thin corridor
consisting of a horizontal piece $H^{(i)}$
visible by the rightmost guard
in $\poly^{(i)}$, and a vertical piece $V^{(i)}$ visible to the middle
guard in $\poly^{(i+1)}$ (see Figure~\ref{fig43} for the case $n=2$).
We can then guard $\poly_n$ using $3n$ guards, by placing three guards within each polygon $\poly^{(i)}$ in the same way as for $\poly$, i.e., at
irrational points.

Now, assume that $\poly_n$ can be guarded by at most $4n-1$ guards. We will show that at least one guard must be irrational.
For formal reasons,
we define $H^{(0)}=V^{(0)}=H^{(n)}=V^{(n)}=\emptyset$.
The horizontal and vertical corridors $H^{(i)}$ and $V^{(i)}$, for $i\in\{0,\ldots,n\}$,
intersect at a rectangular area $B^{(i)}=H^{(i)}\cap V^{(i)}$ which we call a \emph{bend}.
For $i\in\{1,\ldots,n-1\}$, the bend $B^{(i)}$ is non-empty and 
visible from both polygons $\poly^{(i)}$ and $\poly^{(i+1)}$.
Define the \emph{extension} of $\poly^{(i)}$, denoted by
$E(\poly^{(i)})$, to be the union of
$\poly^{(i)}$ and the adjacent corridors excluding the bends, i.e.,
$E(\poly^{(i)})=\poly^{(i)}\cup (V^{(i-1)}\setminus B^{(i-1)})
\cup (H^{(i)}\setminus B^{(i)})$.
Since the extensions are pairwise disjoint, there is an extension
$E(\poly^{(i)})$ containing at most three guards.
If there are no guards in any of the bends $B^{(i-1)},B^{(i)}$
it follows from Theorem~\ref{thm:3-irrational} that three guards must be placed
inside $\poly^{(i)}$ at irrational coordinates, so assume that there is
a guard in one or both of the bends.
If the adjacent corridors $V^{(i-1)}$ and $H^{(i)}$ are long enough and thin enough,
a guard in the bends $B^{(i-1)}$ and $B^{(i)}$ cannot see
any left corner of any of the vertical pockets of $\poly^{(i)}$,
any point in a triangular pocket, or any point in a horizontal pocket.
Hence, all the features of $\poly^{(i)}$ that enforce the irrationality of
the guards are unseen by the guards in the bends and it follows
that there must be irrational guards in $\poly^{(i)}$. Therefore, at least
$4n$ guards are needed if we require them to be rational. Similarly as in the proof of
Theorem \ref{thm:3-irrational}, we can show that $4n$ rational guards are enough to guard $\poly_n$.
\end{proof}

\section{Rectilinear Polygon}\label{sec:rectilinear}

  Figure~\ref{fig:RectilinearPolygonComplete2} depicts a rectilinear polygon
  $\poly_R$ with corners at rational coordinates that can be guarded by $9$ guards,
  but requires $10$ guards if we restrict the guards to points with rational
  coordinates.
  Before the formal proof, we want to give the reader a short overview. 
  The construction of $\poly_R$ starts with the polygon \poly from Theorem~\ref{thm:3-irrational}. We will extend the non-rectilinear parts by ``equivalent'' rectilinear parts,
  colored gray in the figure.
  The rectilinear pockets will be constructed in such a way, that each of them will require at least one guard in the interior. Additionally, if the interior of each pocket contains only one guard, then these guards must be placed at specific positions, making the
area not seen by these six additional guards 
exactly the polygon $\poly$
  described in Section \ref{sec:polygon} (the white area in Figure~\ref{fig:RectilinearPolygonComplete2}).
  Thus, the remaining $3$ guards must be placed at three irrational points
  by Theorem~\ref{thm:3-irrational}.
  
    \begin{figure}[htbp]
    \centering
    \includegraphics{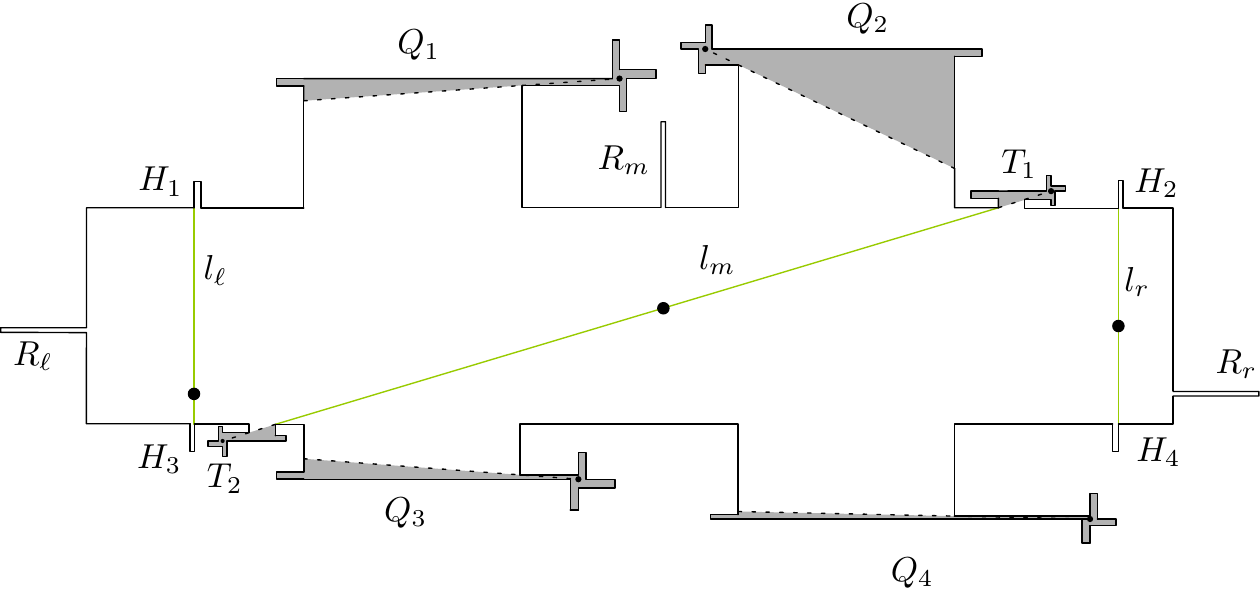}
    \caption{The rectilinear polygon $\poly_R$ can be guarded with $9$ guards only when we allow placing guards at irrational points.}
    \label{fig:RectilinearPolygonComplete2}
  \end{figure}

  \begin{proof}[Proof of Theorem~\ref{thm:rectilinear}.]
We describe a polygon $\poly_R$ with vertices at integer coordinates
that can be guarded by $9$ guards with irrational coordinates, but needs
$10$ guards if only rational coordinates are allowed.

  The construction of $\poly_R$ starts with the polygon \poly from Theorem~\ref{thm:3-irrational}. We will replace the non-rectilinear parts by ``equivalent'' rectilinear parts, see Figure~\ref{fig:RectilinearPolygonComplete2} for an illustration of the complete polygon $\poly_R$ and the notation therein. The additional areas need to be guarded by additional guards, as will be described later.
  
  First, consider the triangular pockets of the polygon $\poly$. These pockets have been added to enforce the guards to be on the lines $l_\ell,l_m,l_r$.
  Four of these pockets, the ones corresponding to $l_\ell$ and $l_r$, can be easily replaced by corresponding rectilinear pockets denoted by $H_1,H_2,H_3,H_4$, where three vertices of the new rectilinear pockets are the same as the vertices of the original triangular pockets.
  This does not work for the pockets corresponding to the line $l_m$, as this line is not axis-parallel, in particular, a guard on the line $l_m$ would not see all of
  the interior of such rectangular pockets.
  
  The two triangular pockets corresponding to $l_m$ and the four
  quadrilateral pockets will be extended to new, more complicated pockets.
  Note that there are only two different kinds of pockets that need to be extended, \emph{triangular pockets} and \emph{quadrilateral pockets}, as pictured on the left of Figure~\ref{fig:TrianglePocket} and Figure~\ref{fig:QuadranglePocket}.
  Each triangular pocket is defined by three vertices and one of the sides of each pocket is not axis-parallel.
  Similarly, each quadrilateral pocket is defined by four vertices and one of the sides of each pocket is not axis-parallel.
  
  Consider a pocket $P$, which needs to be extended in order to become
  rectilinear. Our extensions are pictured in the middle of Figure~\ref{fig:TrianglePocket} and~\ref{fig:QuadranglePocket}. 
  The green area in the middle of Figure~\ref{fig:TrianglePocket}  and~\ref{fig:QuadranglePocket} is a newly-created pocket $Q$.
  For now, let us assume that $Q$ does not intersect other parts of the polygon.
  The pocket $Q$ satisfies the following properties (see the right pictures in Figure~\ref{fig:TrianglePocket} and~\ref{fig:QuadranglePocket}).
  
 \begin{enumerate}[label=(\alph*), itemsep = 0.05cm]
  \item \label{itm:invisible} There are four points $p_1(Q),p_2(Q),p_3(Q),p_4(Q)$ within $Q$, such that each of them can only be seen by a guard, which is inside $Q$. 
  \item  \label{itm:Unique} There exists exactly one point $q(Q)$ that can see all four points 
  $p_1(Q),p_2(Q),p_3(Q),p_4(Q)$. 
  \item \label{itm:AllSeen} The point $q(Q)$ sees exactly the interior of $Q$.
  \item All vertices of $Q$ are rational.
 \end{enumerate}
  We now show that a pocket $Q$ satisfying all these properties can be constructed. First, we extend the non-axis-parallel edge of the pocket $P$ in the direction outside the
  polygon and place a point $q=q(Q)$, with rational coordinates, on it.
  We let $p_1,p_2,p_3,p_4$ be points with rational coordinates
  directly above, to the right, below,
  and to the left of $q$, respectively.
  Then, we construct four rectilinear
  sub-pockets each with a vertex at one of the points $p_1,p_2,p_3,p_4$,
  so that all these can be seen by $q$.
  These pockets can also be constructed with rational coordinates because $q$ has rational coordinates. Clearly, we can choose the point $q$ close enough to $P$ so that the resulting pocket $Q$ does not intersect the rest of the polygon.
  
  Let $\poly_R$ be the constructed rectilinear polygon as pictured in Figure~\ref{fig:RectilinearPolygonComplete2}, where all triangular and horizontal pockets have been extended by rectilinear pockets. We have to show that $\poly_R$ can be
  guarded by $9$ guards, but that we need $10$ guards if
  we require the guards to be at rational coordinates.
  The underlying idea is that after an optimal placement of one guard in each of the
  six pockets that have been extended in order to become rectilinear,
  the remaining area that must be seen by the remaining guards is exactly the same as in the original polygon $\poly$.
   
   \begin{figure}[htbp]
    \centering
    \includegraphics{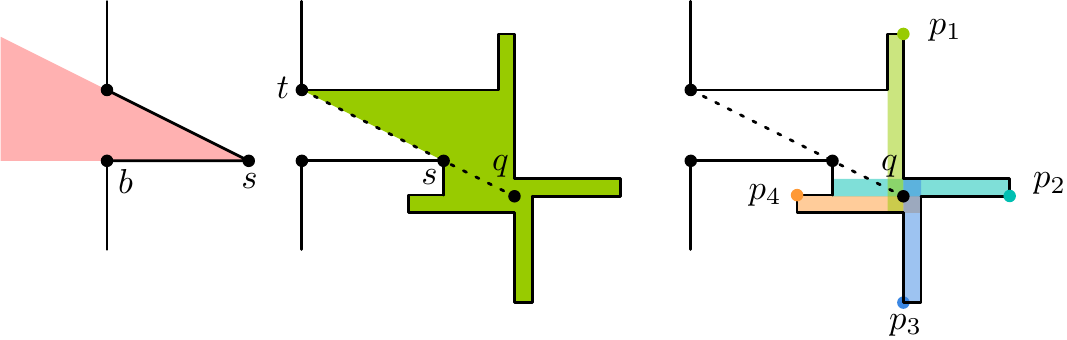}
    \caption{A triangular pocket is extended into a new rectilinear pocket.}
    \label{fig:TrianglePocket}
  \end{figure}

  \begin{figure}[htbp]
    \centering
    \includegraphics{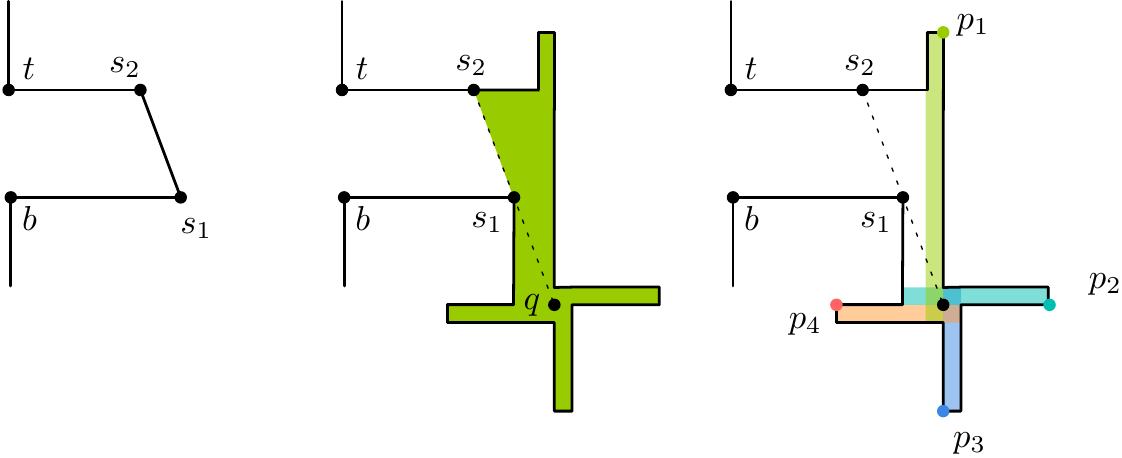}
    \caption{A quadrangular pocket is extended into a new rectilinear pocket.}
    \label{fig:QuadranglePocket}
  \end{figure}

  We first present a solution with $9$ guards
  when we are allowed to place guards at points with irrational coordinates.
  For this we place guards
  at the points $q(T_1)$, $q(T_2)$, $q(Q_1)$, $q(Q_2)$, $q(Q_3)$, $q(Q_4)$
  so that the interior of each of the pockets
  $T_1,T_2, Q_1,Q_2,Q_3,Q_4$ is seen,
  see Property~\ref{itm:Unique} and~\ref{itm:AllSeen}.
  Then we cover the remaining part of the polygon with three irrational guards
  as described in the proof of Theorem~\ref{thm:3-irrational}.
  
  It remains to show that $10$ guards are required when we
  restrict the guards to have rational coordinates.
  Suppose for the purpose of contradiction that there is a solution with $9$ rational guards.
  Note that there must be at least one guard in each pocket $Q\in \{T_1,T_2, Q_1,Q_2,Q_3,Q_4\}$ because of Property~\ref{itm:invisible}.
  We will now show that there must be at least three guards placed outside of $T_1 \cup T_2 \cup Q_1 \cup Q_2 \cup Q_3 \cup Q_4$.
  First, notice that no guard placed in any of the pockets $ T_1$, $T_2$, $Q_1$, $Q_2$, $Q_3$, $Q_4$ can see any of the following points: the top-left vertex of $H_1$ and $H_2$, and the bottom-right vertex of $H_3$ and $H_4$. To see these four points, at least two guards are needed. If there are only two guards, one of them must lie on $l_\ell$, and the other on $l_r$. But then none of the guards placed on $l_\ell \cup l_r \cup T_1 \cup T_2 \cup Q_1 \cup Q_2 \cup Q_3 \cup Q_4$ can see the top edge of the pocket $R_m$, and one more guard is needed. Therefore, at least three guards must be placed outside of $T_1 \cup T_2 \cup Q_1 \cup Q_2 \cup Q_3 \cup Q_4$.
  
 When only $9$ guards are available, there must be exactly $3$ guards outside the pockets $T_1$, $T_2$, $Q_1$, $Q_2$, $Q_3$, $Q_4$, and exactly one guard inside each pocket $Q \in \{T_1,T_2, Q_1,Q_2,Q_3,Q_4\}$.
  As each pocket $Q$ contains exactly one guard,
  then this guard must be the point $q(Q)$ because of Property~\ref{itm:invisible} and~\ref{itm:Unique}.
  Let $\poly_R^*$ be the area unseen by the guards within the six pockets.
  This polygon is exactly $\poly$, and by Theorem~\ref{thm:3-irrational} the unique solution with three guards is irrational. 
  A polygon with integer coordinates can then be obtained by multiplying all coordinates with the least common multiple of all denominators of the coordinates.
  \end{proof}

\section{Future Work}\label{sec:future}
 One of the most prominent open questions related to the art gallery problem
 is whether the problem is in $\NP$.
Recently, some researchers popularized an interesting complexity class, called $\exists \R$, being somewhere between $\NP$ and $\text{PSPACE}$~\cite{Cardinal:2015:CGC:2852040.2852053, Schaefer2010, canny1988some, DBLP:journals/corr/Matousek14}.
  Many geometric problems for which membership in $\NP$ is uncertain
  have been shown to be complete for the complexity class $\exists \R$.
  Famous examples are:
  order type realizability,
  pseudoline stretchability, recognition of segment intersection graphs, 
  recognition of unit disk intersection graphs,
  recognition of  point visibility graphs,
  minimizing rectilinear crossing number,
  linkage realizability.
  This suggests that there might indeed be no polynomial sized witness for any of these problems as this would imply $\NP = \exists \R$.
It is an interesting open problem whether the art gallery problem is
$\exists\R$-complete or not.

The irrational coordinates of the guards in our examples are all of degree $2$, i.e.,
they are roots in second-degree
polynomials with integer coefficients. We would like to know if polygons exist where
irrational numbers of higher degree are needed in the coordinates of an optimal
solution.

We have constructed a simple polygon requiring three
guards placed at points with irrational coordinates.
It is a natural question whether there exists a polygon which can be guarded by
two guards only if they are placed at points with irrational coordinates.

We show that there exists polygons for which
$|OPT_{\Q}|\geq \frac{4}{3}|OPT|$.
It follows from the work by
Bonnet and Miltzow~\cite{DBLP:journals/corr/BonnetM16b} that
it always holds that
$|OPT_{\Q}|\leq 9 |OPT|$.
It is interesting to see if any of these bounds can be improved.

\medskip
\noindent
\textbf{Acknowledgement.} We want to thank  S\'{a}ndor Fekete, Frank Hoffmann, Udo Hoffmann, Linda Kleist, P\'{e}ter Kutas, G\"{u}nter Rote and Andrew Winslow for discussions on the problem and links to the literature.
Special thanks goes to Micha\l{} Adamaszek for providing the sage code.

  Mikkel Abrahamsen is partially supported by Mikkel Thorup's
  Advanced Grant from the Danish Council for Independent Research
  under the Sapere Aude research career programme.
  Anna Adamszek is supported by the Danish Council for Independent Research DFF-MOBILEX mobility grant.
  Tillmann Miltzow is supported by 
  the ERC grant ``PARAMTIGHT: Parameterized complexity and the search for tight complexity results", no. 280152.
   We want to further thank the developers of the software GeoGebra. 
   Being able to do computations and visualize parameter changes in real time facilitated our search tremendously.

\bibliographystyle{plain}
\bibliography{refs}

\newpage
\appendix

\section{Computations}\label{apx:CurveComputation}
\lstset{ %
  backgroundcolor=\color{white},   
}

\begin{lstlisting}
def colinear(A,B,C):
    return Matrix([[A[0],A[1],1],[B[0],B[1],1],[C[0],C[1],1]]).determinant()
   
R.<t,p1,q1,p2,q2,x,y> = QQ[]

eq1 = ideal(
colinear((2,t),(4,4),(p1,q1)),
colinear((2,t),(4,0),(p2,q2)),
colinear((4,280/47),(p1,q1),(8,294/47)),
colinear((4,-12/19),(p2,q2),(8,-18/19)),
colinear((p1,q1),(8,4),(x,y)),
colinear((p2,q2),(8,0),(x,y))
    ).elimination_ideal([t,p1,q1,p2,q2]).gens()[0]

eq2 = ideal(
colinear((19,t),(16,4),(p1,q1)),
colinear((19,t),(16,0),(p2,q2)),
colinear((16,1776/375),(p1,q1),(12,2486/375)),
colinear((16,-36/21),(p2,q2),(12,-34/21)),
colinear((p1,q1),(12,4),(x,y)),
colinear((p2,q2),(12,0),(x,y))
    ).elimination_ideal([t,p1,q1,p2,q2]).gens()[0]

print eq1
print eq2
\end{lstlisting}

\end{document}